\documentclass[11pt]{article}

\usepackage[margin=1in]{geometry}

\parskip .75ex

\linespread{1}

\usepackage{hyperref}
\usepackage{appendix}

\usepackage{graphicx,amssymb,amstext,amsmath,amsthm,amsfonts,verbatim,latexsym,mathrsfs}
\usepackage[sans]{dsfont}
\usepackage{fancyhdr}
\usepackage{ifthen}
\usepackage[noend]{algorithmic}
\usepackage{algorithm}

\newtheorem{lemma}{Lemma}[section]

\newtheorem{theorem}[lemma]{Theorem}

\newtheorem{claim}{Claim}

\newtheorem{corollary}[lemma]{Corollary}

\newtheorem{conjecture}[lemma]{Conjecture}

\theoremstyle{definition}

\newtheorem{definition}[lemma]{Definition}

\newcommand{\calC}{\mathcal{C}}

\newcommand{\bbC}{\mathbb{C}}

\newcommand{\bbF}{\mathbb{F}}

\newcommand{\bbR}{\mathbb{R}}

\newcommand{\bfR}{\mathbf{R}}





\newcommand{\rank}{\operatorname{rank}}

\newcommand{\mon}{\operatorname{mon}}

\newcommand{\poly}{\operatorname{poly}}

\newcommand{\eps}{\epsilon}

\newcommand{\fc}[1]{\widehat{#1}}



\newcommand{\comf}[2]{
\ifthenelse{\equal{#1}{1}}{
                     \ifthenelse{\equal{#2}{1}}{f:\{0,1\}^n \times \{0,1\}^n \to \{0,1\}}{}
}{}
\ifthenelse{\equal{#1}{-1}}{
                     \ifthenelse{\equal{#2}{1}}{f:\{-1,1\}^n \times \{-1,1\}^n \to \{0,1\}}{}
}{}
\ifthenelse{\equal{#1}{1}}{
                     \ifthenelse{\equal{#2}{-1}}{f:\{0,1\}^n \times \{0,1\}^n \to \{-1,1\}}{}
}{}
\ifthenelse{\equal{#1}{-1}}{
                     \ifthenelse{\equal{#2}{-1}}{f:\{-1,1\}^n \times \{-1,1\}^n \to \{-1,1\}}{}
}{}
}

\newcommand{\f}[1]{\textnormal{\textsc{#1}}}

\newcommand{\norm}[1]{ \| #1 \|}

\newcommand{\ip}[1]{\langle #1 \rangle}

\newcommand{\defeq}{\overset{\textnormal{def}}{=}}

\newcommand{\extra}[1]{}

\renewcommand{\comment}[1]{}
\newcommand{\nc}{\newcommand}
\nc{\ox}{\otimes}
\nc{\dg}{\dagger}
\nc{\dn}{\downarrow}
\nc{\cA}{{\cal A}}
\nc{\cB}{{\cal B}}
\nc{\cC}{{\cal C}}
\nc{\cD}{{\cal D}}
\nc{\cE}{{\mathcal E}}
\nc{\cF}{{\cal F}}
\nc{\cG}{{\cal G}}
\nc{\cH}{{\cal H}}
\nc{\cI}{{\cal I}}
\nc{\cJ}{{\cal J}}
\nc{\cK}{{\cal K}}
\nc{\cL}{{\cal L}}
\nc{\cM}{{\cal M}}
\nc{\cN}{{\cal N}}
\nc{\cO}{{\cal O}}
\nc{\cP}{{\cal P}}
\nc{\cR}{{\cal R}}
\nc{\cS}{{\cal S}}
\nc{\cT}{{\cal T}}
\nc{\cU}{{\cal U}}
\nc{\cX}{{\cal X}}
\nc{\cZ}{{\cal Z}}

\nc{\entI}{{\bf I}}
\nc{\entIarrow}{{\bf I}^{\leftarrow}}
\nc{\entH}{{\bf H}}
\nc{\entS}{{\bf S}}
\nc{\entHmin}{\mathbf{H}_{\min}}

\nc{\entF}{{\bf E}_f}

\nc{\isom}{\simeq}

\nc{\rar}{\rightarrow}
\nc{\lrar}{\longrightarrow}
\nc{\1}{{\openone}}

\nc{\weight}{\textbf{w}}
\nc{\hamdist}{d_{H}}

\nc{\Sp}{{{\mathbb S}}}
\nc{\RR}{{{\mathbb R}}}
\nc{\CC}{{{\mathbb C}}}
\nc{\FF}{{{\mathbb F}}}
\nc{\NN}{{{\mathbb N}}}
\nc{\ZZ}{{{\mathbb Z}}}
\nc{\PP}{{{\mathbb P}}}
\nc{\QQ}{{{\mathbb Q}}}
\nc{\UU}{{{\mathbb U}}}
\nc{\OO}{{{\mathbb O}}}
\nc{\EE}{{{\mathbb E}}}
\nc{\id}{{\operatorname{id}}}

\nc{\qubitchannel}{\id_2}
\nc{\bitchannel}{\overline{\id}_2}


\nc{\be}{\begin{equation}}
\nc{\ee}{{\end{equation}}}
\nc{\bea}{\begin{eqnarray}}
\nc{\eea}{\end{eqnarray}}
\nc{\<}{\langle}
\nc{\Hom}[2]{\mbox{Hom}(\CC^{#1},\CC^{#2})}
\nc{\rU}{\mbox{U}}

\nc{\ob}[1]{#1}

\newcommand{\ex}[1]	{\mathbf{E}\left[ #1 \right]}
\newcommand{\exd}[2]	{\mathbf{E}_{#1}\left[ #2 \right]}
\newcommand{\pr}[1]	{\mathbf{P}\left[ #1 \right]}

\newcommand{\eqdef}	{\stackrel{\textrm{def}}{=}}

\renewcommand{\exp}[1]	{\operatorname{exp}\left( #1 \right)}

\newcommand{\floor}[1]	{\left\lfloor #1 \right\rfloor}

\nc{\re}{\mathrm{Re}}

\newcommand{\xor}{\oplus}

\nc{\binent}{h}




\nc{\rc}{r^{\angle}}



\begin{document}

\title{Spectral Norm of Symmetric Functions}

\author{Anil Ada\footnote{School of Computer Science, McGill University. {\small {\tt aada@cs.mcgill.ca}}} \and Omar Fawzi\footnote{School of Computer Science, McGill University. {\small {\tt ofawzi@cs.mcgill.ca}}} \and Hamed Hatami\footnote{School of Computer Science, McGill University. {\small {\tt hatami@cs.mcgill.ca}}}}

\date{\today}

\maketitle

\begin{abstract}
The spectral norm of a Boolean function $f:\{0,1\}^n \to \{-1,1\}$ is the sum of
the absolute values of its Fourier coefficients. This quantity provides useful upper
and lower bounds on the complexity of a function in areas such as learning theory,
circuit complexity, and communication complexity. In this paper, we give a combinatorial
characterization for the spectral norm of symmetric functions. We show that
the logarithm of the spectral norm is of the same order of magnitude as $r(f)\log(n/r(f))$ where $r(f) =
\max\{r_0,r_1\}$, and $r_0$ and $r_1$ are the smallest integers less than $n/2$
such that $f(x)$ or $f(x) \cdot \f{parity}(x)$ is constant for all $x$ with $\sum x_i \in [r_0, n-r_1]$.
We mention some applications to the decision tree and communication complexity of symmetric functions.
\end{abstract}
\section{Introduction}

The study of Boolean functions $f:\{0,1\}^n \to \{-1,1\}$ is central to complexity theory and combinatorics as objects of interest in these areas can often be represented as Boolean functions. Fourier analysis of Boolean functions provides some of the strongest tools in this study with applications to graph theory, circuit complexity, communication complexity, hardness of approximation, machine learning, etc.


In many different settings, Boolean functions with ``smeared out'' Fourier spectrums have higher ``complexity''. There are various useful ways to measure the spreadness of the spectrum. Some notable ones are the spectral norm $\norm{\fc{f}}_1 = \sum_S |\fc{f}(S)|$ (i.e., the $\ell_1$ norm), the $\ell_\infty$ norm $\norm{\fc{f}}_\infty = \max_S |\fc{f}(S)|$, and the Shannon entropy of the squares of the Fourier coefficients $H[\fc{f}^2] = - \sum_S \fc{f}(S)^2 \log \fc{f}(S)^2$. The focus of this paper is on the spectral norm.

\subsubsection*{Spectral Norm of Boolean Functions}\label{sec:history}

As $\sum_S \fc{f}(S)^2 = 1$ for a Boolean function $f$, it is often useful to view the squares of the Fourier coefficients as a probability distribution over the subsets $S \subseteq [n]$. The spectral norm corresponds to the R\'enyi entropy of order $1/2$ of the squares of the Fourier coefficients, $H_{1/2}[\fc{f}^2] = 2 \log \left ( \sum_S |\fc{f}(S)| \right) = 2 \log \norm{\fc{f}}_1$. It provides useful upper and lower bounds on the \emph{complexity} of a function in settings such as learning theory, circuit complexity, and communication complexity. It is particularly useful in the settings where $\f{parity}$ is considered a function of low complexity. We list some of the applications below.

In the setting of learning theory, the spectral norm is used in conjunction with the \emph{Kushilevitz-Mansour Algorithm} \cite{KM91}. This algorithm, using membership queries, learns efficiently a concept class $\mathcal{C}$ where the Fourier spectrum of every function in $\mathcal{C}$ is concentrated on a small set of characters (This set can be different for different functions.). Kushilevitz and Mansour observe that an upper bound on the spectral norm implies such a concentration, and obtain:
\begin{quote}
 If $\mathcal{C} = \{f:\{0,1\}^n \to \{-1,1\} \; | \; \norm{\fc{f}}_1 \leq s\}$, then $\calC$ is learnable with membership queries in time $\poly(n,s,1/\eps)$.
\end{quote}
Using the above result, they show that functions computable by small size parity decision trees\footnote{Parity decision trees generalize the usual decision tree model: in every node we branch according to the parity of a subset of the variables.} are efficiently learnable with membership queries. This is done by observing that a function computable by a size $s$ parity decision tree satisfies $\norm{\fc{f}}_1 \leq s$. This inequality is also interesting since it provides a lower bound  in terms of the spectral norm on the size of any parity decision tree computing $f$.


Threshold circuits (i.e., circuits composed of threshold gates) constitute an important model of computation (in part due to their resemblance to neural networks), and they have been studied extensively. A classical result of Bruck and Smolensky \cite{BS92} states that a function with small spectral norm can be represented  as the sign of a polynomial with few monomials. This in turn implies that functions with small spectral norm can be computed by depth 2 threshold circuits of small size. The result of Bruck and Smolensky has found other interesting applications (see for example \cite{SB91,GHR92,Gro99,OS08}).

We now turn our attention to communication complexity. Arguably the most famous conjecture in communication complexity is the Log Rank Conjecture which states that the deterministic communication complexity of a function $F: \{0,1\}^n \times \{0,1\}^n \to \{-1,1\}$ is upper bounded by $\log^c \rank M_F$ where the matrix $M_F$ is defined as $M_F[x,y] = F(x,y)$. Grolmusz \cite{Gro97} makes a similar intriguing conjecture for the randomized communication complexity:
\begin{quote}
 There is a constant $c$ such that the public coin randomized communication complexity of $F: \{0,1\}^n \times \{0,1\}^n \to \{-1,1\}$ is upper bounded by $\log^c \norm{\fc{F}}_1$.
\end{quote}
In the same paper, Grolmusz is able to prove a much weaker upper bound of $O(\norm{\fc{F}}_1^2 \delta(n))$ with $\exp{-c\delta(n)}$ probability of error. Even this weaker result has interesting applications in circuit complexity and decision tree complexity (see \cite{Gro97} for more details).

Another major open problem in communication complexity is whether the classical and quantum communication complexity of total Boolean functions $f: X \times Y \to \{-1, 1\}$ (i.e., functions defined on all of $X \times Y$) are polynomially related. It is conjectured that this is so and research has been focused on establishing it for natural large families of functions. In an important paper \cite{Raz03}  Razborov showed that the conjecture is true for functions of the form $F(x,y) = \f{sym}(x \wedge y)$  where $\f{sym}$ denotes a symmetric function, and $x \wedge y$ is the bitwise $\f{and}$ of $x$ and $y$. Shi and Zhang \cite{SZ09a} verified the conjecture for $F(x,y)= \f{sym}(x \xor y)$  where $x \xor y$ denotes the bitwise $\f{xor}$. The next big targets are $F(x,y) = f(x \wedge y)$ and $F(x,y) = f(x \xor y)$ for general $f$, but handling arbitrary $f$ seems difficult at the moment.

A variant of the spectral norm, \emph{the approximate spectral norm}, is intimately related to the communication complexity of ``xor functions''. The $\eps$-approximate spectral norm of $f$, denoted $\norm{\fc{f}}_{1,\eps}$, is the smallest spectral norm of a function $g:\{0,1\}^n \to \bbR$ such that $\|f - g\|_\infty \leq \eps$. It is known (see for example \cite{LS09}) that $\log \norm{\fc{f}}_{1,\eps}$ lower bounds the quantum bounded error communication complexity of $f(x \xor y)$. We expect that the lower bound $\log \norm{\fc{f}}_{1,\eps}$ is tight, and that this quantity characterizes the communication complexity of xor functions. More discussion on the communication complexity of xor functions, and how it relates to this work is given in Section \ref{sec:futurework}.

This ends our discussion of the use of the spectral norm in learning theory, circuit complexity and communication complexity. We conclude this subsection by mentioning a relatively recent result that studies the spectral norm of Boolean functions. Green and Sanders \cite{GS08} show that every Boolean function whose spectral norm is bounded by a constant can be written as a sum of constantly many $\pm$ indicators of cosets. This gives an interesting characterization of Boolean functions with small spectral norm.

\subsubsection*{Fourier Spectrum of Symmetric Functions}

A function $f : \{0,1\}^n \to \{-1,1\}$ is called \emph{symmetric} if it is invariant under permutations of the coordinates. In other words the value of $f(x)$  depends only on $\sum x_i$ (i.e., $f(x) = f(y)$ whenever $\sum_i x_i = \sum_i y_i$). Symmetric functions are at the heart of complexity theory as natural functions like $\f{and}$, $\f{or}$, $\f{majority}$, and $\f{mod}_m$ are all symmetric. They are often the starting point of investigation because the symmetry of the function can be exploited. On the other hand, they can also have surprising power. In several settings, functions such as $\f{parity}$ and $\f{majority}$ represent ``hard'' functions. Given their central role, it is of interest to gain insight into the Fourier spectrum of symmetric functions.

There are various nice results related to the Fourier spectrum of symmetric functions. We cite a few of them here. A beautiful result of Paturi \cite{Pat92} tightly characterizes the approximate degree of every symmetric function, and this has found many applications in theoretical computer science \cite{Raz03,BBCMW01,She09,Wol08, She11}. Kolountzakis {\it et al.} \cite{KLMMV09} studied the so called \emph{minimal degree} of symmetric functions and applied their result in learning theory. Shpilka and Tal \cite{ST11} later simplified and improved the work of Kolountzakis {\it et al.} Recently, O'Donnell, Wright and Zhou \cite{OWZ11} verified an important conjecture in the analysis of Boolean functions, the Fourier Entropy/Influence Conjecture, in the setting of symmetric functions. In fact we make use of their key lemma in this paper.


\subsection{Our Results and Proof Overview}


We give a combinatorial characterization of the spectral norm of symmetric functions. For $x \in \{0,1\}^n$, define $|x| \eqdef \sum x_i$. For a
function $f:\{0,1\}^n \to \{-1,1\}$, let $r_0$ and $r_1$ be the minimum integers less than $n/2$ such that $f(x)$ or $f(x) \cdot \f{parity}(x)$  is constant for $x$ with $|x| \in [r_0, n-r_1]$. Define $r(f) \eqdef \max \{r_0,r_1\}$. We show that $\log \norm{\fc{f}}_1$ is of the same order of magnitude as $r(f)\log(n/r(f))$:

\begin{theorem}[Main Theorem]\label{thm:l1-sym}
For any symmetric function $f : \{0,1\}^n \to \{-1, 1\}$, we have
\[
\log \| \fc{f} \|_1 = \Theta \left( r(f) \log\left(\frac{n}{r(f)}\right) \right)
\]
whenever $r(f) >  1$. If $r(f) \le 1$, then $\| \fc{f} \|_1 = \Theta(1)$.
\end{theorem}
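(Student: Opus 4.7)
My plan is to prove the two directions of the $\Theta$ separately. Throughout, I use the elementary observation that multiplication of $f$ by the parity character $\chi_{[n]}$ preserves $\|\hat f\|_1$ (it merely permutes Fourier coefficients) and swaps the two branches of the disjunction defining $r(f)$, so I may always assume the ``$F$ option'' applies: the common value $F(k)$ of $f(x)$ over inputs $x$ with $|x|=k$ equals a constant $c \in \{-1,1\}$ on the middle range $[r_0, n-r_1]$, where $r = \max(r_0, r_1)$.

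For the upper bound, I would decompose $f = c \cdot \mathbf{1}[|x| \in [r_0, n-r_1]] + f_L + f_H$ into a middle piece and boundary corrections $f_L, f_H$ supported on $\{|x| < r_0\}$ and $\{|x| > n - r_1\}$ respectively. Each piece has support of size at most $O(r \binom{n}{r}) = 2^{O(r \log(n/r))}$, so the Cauchy--Schwarz bound $\|\hat g\|_1 \le 2^{n/2} \|g\|_2 = \sqrt{|\mathrm{supp}(g)|}$ (via Parseval), together with the triangle inequality, yields the desired $\log\|\hat f\|_1 = O(r \log(n/r))$.

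For the lower bound, I would use the Krawtchouk formula $\sum_{|x|=k} \chi_T(x) = K_k(|T|;n)$. After subtracting off the constant middle contribution (which only affects $\hat f(\emptyset)$) and applying the identity $K_{n-k}(t;n) = (-1)^t K_k(t;n)$ to reflect the high-weight deviations onto low-weight ones, the Fourier coefficients at $t = |T| \ge 1$ become
\[
\hat f(T) = -2c \cdot 2^{-n}\bigl(P(t) + (-1)^t Q(t)\bigr),
\]
where $P(t) = \sum_{k \in B_L} K_k(t;n)$ and $Q(t) = \sum_{k' \in B_H^*} K_{k'}(t;n)$ are sums of Krawtchouk polynomials of degree at most $r-1$ indexed by the low deviation set $B_L \subseteq [0, r_0 - 1]$ and the reflected high deviation set $B_H^* \subseteq [0, r_1 - 1]$; the minimality of $r_0$ forces $r_0 - 1 \in B_L$, so $P$ has degree exactly $r-1$. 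Using $K_k(t;n) = e_k(y)$ for $y \in \{-1,1\}^n$ with $t$ coordinates equal to $-1$, one recognizes $\sum_{t \ge 1} \binom{n}{t} |\hat f(T)| = 2\,\mathbb{E}_y |R(y)|$ up to a negligible correction, where $R(y) := P(y) + \chi_{[n]}(y)\, Q(y)$. The core estimate is then $\mathbb{E}|R| = \Omega(\|R\|_2/3^r)$: Parseval plus the vanishing of the cross term $\mathbb{E}[\chi_{[n]} PQ] = \widehat{PQ}([n]) = 0$ (since $\deg(PQ) \le 2(r-1) < n$) gives $\|R\|_2^2 = \|P\|_2^2 + \|Q\|_2^2 \ge \binom{n}{r-1}$; and an analogous computation, in which the cross terms for $\mathbb{E} R^4$ vanish because they involve the $\chi_{[n]}$-coefficient of a polynomial of degree at most $4(r-1) < n$ (valid whenever $r < n/4$), combined with Bonami--Beckner hypercontractivity on the degree-$(r-1)$ pieces $P$ and $Q$, yields $\|R\|_4 = O(3^r)\, \|R\|_2$. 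The Hölder inequality $\mathbb{E}|R| \ge \|R\|_2^3/\|R\|_4^2$ then delivers the bound, giving $\log \|\hat f\|_1 \ge \frac{r-1}{2} \log(n/r) - O(r) = \Omega(r \log(n/r))$ in this regime. The remaining regime $r = \Omega(n)$, where the degree condition $4(r-1) < n$ fails, I would handle separately: there $r \log(n/r) = \Theta(n)$ and it suffices to produce a $2^{\Omega(n)}$ lower bound on $\|\hat f\|_1$, which can be extracted from a direct estimate on a single Fourier coefficient near level $n/2$.

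The main obstacle I anticipate is the potential destructive cancellation between the low-weight polynomial $P$ and the parity-twisted high-weight polynomial $Q$ via the alternating factor $(-1)^t$: at individual values of $t$, the quantity $P(t) \pm Q(t)$ can be near zero even when $P$ and $Q$ are individually large in $\ell_2$. The plan sidesteps this by packaging both contributions into $R(y) = P(y) + \chi_{[n]}(y)\, Q(y)$ on the full cube and exploiting the Fourier orthogonality of $P$ (supported on low levels) and $\chi_{[n]} Q$ (supported on high levels); this orthogonality is exactly what makes the second- and fourth-moment cross terms vanish and allows hypercontractivity to treat $R$ as an effectively degree-$(r-1)$ object rather than the apparent degree-$n$ polynomial it is.
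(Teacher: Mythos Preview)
Your upper bound and your small-$r$ lower bound are correct, and the latter takes a genuinely different route from the paper. The paper lower-bounds $\|\hat f\|_1$ by tracking the quantity $R(f)=\sum_S|S|(n-|S|)\hat f(S)^2$, which it relates to second discrete derivatives $f_{ij}=\Delta_{e_i+e_j}f$, and then pushes $\ell_2$ mass into the middle Fourier levels by a careful choice of cutoff $m_0$ together with the upper bound \eqref{eq:lower-bound-bias} on $\sum_{k>0}W_k$. Your approach instead recognizes $2^n\hat f(T)$ (after subtracting the constant) as a value of a symmetric polynomial $R(y)=P(y)+\chi_{[n]}(y)Q(y)$ with $\deg P,\deg Q\le r-1$, and lower-bounds $\mathbb E|R|$ by the $L^2$--$L^4$ anti-concentration trick. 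The orthogonality observation $\widehat{PQ}([n])=\widehat{P^3Q}([n])=\widehat{PQ^3}([n])=0$ when $4(r-1)<n$ is exactly what makes hypercontractivity applicable to $R$ despite its apparent degree $n$, and this is a neat idea. Your route is shorter and more structural; the paper's route is more quantitative level-by-level and sets up machinery reused in the large-$r$ case.

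The gap is the regime $r\in[n/4,\,n/2)$. There your fourth-moment cross terms no longer vanish, so the anti-concentration bound on $R$ breaks down, and you fall back on ``a direct estimate on a single Fourier coefficient near level $n/2$.'' This is not a proof. What you would need is some $t$ with $\binom{n}{t}\,|\hat f(t)|\ge 2^{cn}$, i.e.\ $|\hat f(t)|\ge 2^{-(1-c)n}$, and there is no evident mechanism forcing this for an arbitrary symmetric $f$ with $r(f)>n/4$: the relevant Krawtchouk sum $\sum_k F(k)K_k(t;n)$ can suffer heavy cancellation at any fixed $t$. Restriction to a subcube does not rescue the argument either, since for functions like $\textsc{majority}$ with $r_0,r_1$ both near $n/2$ no window of width $n'\ll n$ captures a break while keeping $r(f')\le n'/4$. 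The paper handles this regime with a genuinely different ingredient: an OWZ-type noise-sensitivity statement for the derivatives $f_{ij}$ (Lemma~\ref{lem:owz}), which forces a constant fraction of the $k(n-k)$-weighted $\ell_2$ mass into a middle band $[\beta n,(1-\beta)n]$ and hence yields $\|\hat f\|_1=2^{\Omega(n)}$. You will need either that lemma or a comparable idea to close the $r>n/4$ case.
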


As an application, we give a characterization of the parity decision tree size of symmetric functions. As mentioned in Section~\ref{sec:history}, a parity decision tree computes a boolean function by querying the parities of subsets of the variables. The size of the tree is simply the number of leaves in the tree.
\begin{corollary}
\label{cor:par-dec-tree}
Let $f: \{0,1\}^n \to \{-1,1\}$ be a symmetric function. Then the parity decision tree size of $f$ is $2^{\Theta(r(f) \log(n/r(f)))}$.
\end{corollary}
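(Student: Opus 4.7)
The plan is to prove both bounds separately. The lower bound is immediate from Theorem~\ref{thm:l1-sym}: as observed by Kushilevitz and Mansour \cite{KM91}, any parity decision tree of size $s$ computing $f$ satisfies $\norm{\fc{f}}_1 \leq s$, so the Main Theorem gives $s \geq \norm{\fc{f}}_1 = 2^{\Omega(r(f)\log(n/r(f)))}$.

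For the upper bound I would construct an explicit parity decision tree. Let $r_0, r_1$ realize $r(f)$, and let $g \in \{f, f\cdot \f{parity}\}$ be the one that equals a constant $c$ on $\{x : |x|\in[r_0, n-r_1]\}$. The tree queries the individual coordinates $x_1, x_2, \ldots$ in order (each is a singleton parity query), tracking the number $k$ of bits queried and the number $s$ of ones seen. It stops at the first state satisfying either (a) $s \geq r_0$ and $k - s \geq r_1$, which forces $|x| \in [r_0, n-r_1]$, or (b) $k = n$, in which case $|x| = s$ is known exactly. In case (b) it outputs $f$ evaluated at any string with sum $s$; in case (a) it outputs $c$ when $g = f$, and otherwise performs one extra parity query over the unqueried coordinates to recover $\f{parity}(x)$ (combined with $s \bmod 2$) and outputs $c \cdot \f{parity}(x)$. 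Correctness follows from the symmetry of $f$ and the definition of $r_0, r_1$.

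To bound the size I would count leaves by type. A case (a) leaf corresponds to the first state $(k,s)$ meeting the criterion, so exactly one of the two inequalities has just become tight: the leaf has either $s = r_0$ or $k-s = r_1$. Summing over $k$ via the hockey stick identity yields at most $\binom{n}{r_0} + \binom{n}{r_1}$ such leaves. Case (b) leaves are depth-$n$ paths whose final $|x|$ lies outside $[r_0, n-r_1]$, contributing at most $2\binom{n}{\leq r(f)}$. The optional parity query at most doubles the case (a) contribution. Using $\binom{n}{\leq r} \leq (en/r)^r$, the total is $2^{O(r(f)\log(n/r(f)))}$, matching the lower bound.

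I expect the main subtlety to be the leaf count itself, specifically verifying that every path avoids the stopping criterion at every strict prefix before meeting it at the leaf --- this is automatic since along such a prefix either $s < r_0$ or $k - s < r_1$ holds throughout --- and then executing the hockey stick sum carefully. The construction itself is natural: exploit symmetry to localize $|x|$ just well enough to determine $f(x)$, using one extra parity query only when the middle case depends on $\f{parity}(x)$.
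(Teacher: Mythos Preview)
Your proposal is correct and follows essentially the same route as the paper. Both use the Kushilevitz--Mansour bound $\norm{\fc{f}}_1 \le s$ together with Theorem~\ref{thm:l1-sym} for the lower bound, and both build the upper-bound tree by querying coordinates one at a time until either $s \ge r_0$ and $k-s \ge r_1$ (your case (a)) or $k=n$ (your case (b)); the leaf counts match via the same hockey-stick sum, and the extra parity query to handle the middle-$\pm\f{parity}$ case costs a factor of $2$ either way (the paper places it at the root, you at the case-(a) leaves).

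One small omission worth noting: when $r(f)\le 1$, Theorem~\ref{thm:l1-sym} gives only $\norm{\fc{f}}_1=\Theta(1)$, so your lower bound yields nothing there, whereas the corollary requires size $n^{\Theta(1)}$ when $r(f)=1$ (e.g.\ $f=\f{or}$). The paper treats this boundary case separately by observing that some leaf must isolate one of the at most two exceptional inputs, forcing depth $\ge n-1$; you should add a line covering this case.
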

The proof of this corollary is presented in Section~\ref{sec:cor}. Note that the lower bound also applies in the case of the usual decision tree size (where one is restricted to query only variables). Decision tree size is an important measure in learning theory; algorithms for learning decision trees efficiently is of great interest both for practical and theoretical reasons. One of the most well-known and studied problems is whether small size decision trees are efficiently learnable from uniformly random examples.

As a second application, using the protocol of Shi and Zhang \cite[Proposition 3.4]{SZ09a}, and the observation that $\norm{\fc{F}}_1 = \norm{\fc{f}}_1$ when $F(x,y) = f(x \xor y)$, we verify Grolmusz's conjecture mentioned earlier in Section~\ref{sec:history} in the setting of \emph{symmetric xor functions}.

\begin{corollary}
 Let $f: \{0,1\}^n \to \{-1,1\}$ be a symmetric function and let $F :\{0,1\}^n \times \{0,1\}^n \to \{-1,1\}$ be defined as $F(x,y) = f(x \xor y)$. Then the public coin constant error randomized communication complexity of $F$ is upper bounded by $O(\log^2 \norm{\fc{F}}_1)$.
\end{corollary}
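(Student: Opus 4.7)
The plan is to combine the main theorem with the randomized protocol of Shi and Zhang.

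First, I would note that for any xor-composed function $F(x,y) = f(x \xor y)$, a direct Fourier computation gives $\fc{F}(S,T) = \fc{f}(S)$ when $S = T$ and $\fc{F}(S,T) = 0$ otherwise, and hence $\norm{\fc{F}}_1 = \norm{\fc{f}}_1$. Applying Theorem~\ref{thm:l1-sym} then yields $\log \norm{\fc{F}}_1 = \Theta(r(f) \log(n/r(f)))$ whenever $r(f) > 1$, and $\norm{\fc{F}}_1 = \Theta(1)$ otherwise.

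Next, I would invoke \cite[Proposition~3.4]{SZ09a}, which gives a public coin constant-error randomized protocol computing $F$ with communication $O(r(f) \log n)$. Their protocol exploits the fact that $F(x,y)$ depends only on the Hamming distance $d = |x \xor y|$ (and possibly its parity), and that by the definition of $r(f)$ the function $f$ is essentially constant on $d \in [r_0, n-r_1]$. Standard randomized Hamming-distance estimation then suffices either to certify that $d$ lies in the bulk, or to learn the exact value of $d$ when it is within $r_0$ of $0$ or within $r_1$ of $n$.

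To finish, I would verify the arithmetic inequality $r(f) \log n = O\!\left((r(f) \log(n/r(f)))^2\right)$ for $r(f) > 1$. Writing $r = r(f)$ and using $r < n/2$ (so $\log(n/r) \ge 1$), this reduces to $\log n = O(r \log^2(n/r))$, which follows by a two-case split: if $r \ge \log n$ the bound is immediate, while if $r < \log n$ then $\log(n/r) = \Omega(\log n)$ and hence $r \log^2(n/r) = \Omega(\log^2 n)$. The degenerate case $r(f) \le 1$ is treated separately: there $f$ is either constant, equal to $\pm\f{parity}$, or agrees with one of these except at a single Hamming weight, so $F$ admits a direct $O(1)$-bit protocol (using public-coin equality testing for the exceptional points), matching the $O(1)$ bound coming from $\log^2 \norm{\fc{F}}_1$.

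The only nontrivial point is extracting from \cite[Proposition~3.4]{SZ09a} an explicit $O(r(f) \log n)$ upper bound and verifying that its hypotheses apply to the symmetric $f$ at hand; once that is in place, the remainder is purely a bookkeeping comparison between two explicit quantities.
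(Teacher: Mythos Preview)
Your proposal is correct and follows exactly the route the paper indicates: the paper gives no separate proof beyond the sentence preceding the corollary, which points to the Shi--Zhang protocol \cite[Proposition~3.4]{SZ09a} together with the identity $\|\fc{F}\|_1=\|\fc{f}\|_1$, and implicitly to Theorem~\ref{thm:l1-sym} for converting between $r(f)$ and $\log\|\fc{f}\|_1$. You have simply filled in the details the paper leaves implicit---the arithmetic comparison $r\log n = O\bigl((r\log(n/r))^2\bigr)$ and the $r(f)\le 1$ edge case---and your treatment of both is sound.
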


We now give an outline for the proof of Theorem \ref{thm:l1-sym}. The upper bound is quite straightforward and is given in Lemma \ref{lem:upper-bound}. The lower bound is handled in two different cases: when $r(f)$ is bounded away from $n/2$ (Lemma \ref{lem:lb-small-r}) and when $r(f)$ is close to $n/2$ (Lemma \ref{lem:lb-large-r}).

We refer to the Fourier spectrum of $f$ restricted to the sets $S \subseteq [n]$ of size $k$ as the \emph{$k$-th level} of the Fourier spectrum. Note that for a symmetric $f$, we have $\fc{f}(S) = \fc{f}(T)$ whenever $|S| = |T|$. Therefore the Fourier spectrum is maximally spread out in each level. The overall strategy for the lower bound is to show an appropriate lower bound on the $\ell_2$ mass of the Fourier spectrum on a middle level. Middle levels have many Fourier coefficients, and therefore contribute significantly to the spectral norm provided there is enough $\ell_2$ mass on them. An important tool in our analysis is the use of certain discrete derivatives of $f$. Identify $\{0,1\}^n$ with $\mathbb{F}_2^n$ and let $e_1,\ldots,e_n$ 
denote the standard vectors in $\mathbb{F}_2^n$. For $i \neq j$, define $f_{ij}(x) \defeq f(x+e_i+e_j)-f(x)$. We observe that
\[
 \sum_{i \neq j} \ex{f_{ij}^2} = 8\sum_{S} |S| (n-|S|) \fc{f}(S)^2.
\]
The quantity on the LHS, and therefore the RHS, can be lower bounded using $r(f)$ (Lemma \ref{lem:bias-r}). As the coefficient $|S|(n-|S|)$ increases as $|S|$ approaches $n/2$, we are able to give a lower bound on the $\ell_2$ mass of the Fourier spectrum on the middle levels. This approach gives tight bounds for $r(f)$ bounded away from $n/2$, but not for  a function such as $\f{majority}$.

To handle functions $f$ with $r(f)$ close to $n/2$, we use ideas from \cite{OWZ11}. The main lemma of \cite{OWZ11} states that the first derivatives of a symmetric function are noise sensitive. We observe that this is also true for the  derivatives $f_{ij}$. This allows us to derive the inequality
\[
  \sum_S |S|(n-|S|)\fc{f}(S)^2 (\rho^{|S|} + \rho^{n-|S|}) \leq  \frac{8}{\sqrt{\pi c}} \cdot \sum_S |S|(n-|S|)\fc{f}(S)^2,
\]
where $\rho = (1-c/n)$. The quantity $\rho^{|S|} + \rho^{n-|S|}$ is decreasing in $|S|$ for $|S| \leq n/2$. Thinking of $c$ as a large constant, we see that the dampening of the middle levels with $\rho^{|S|} + \rho^{n-|S|}$ decreases the value of the sum significantly. From this, we can lower bound the $\ell_2$ mass of the middle levels. Note that if $\sum_S |S|(n-|S|)\fc{f}(S)^2$ is small to begin with ($r(f)$ is small), the above inequality is not useful. On the other hand if $r(f)$ is large, $\sum_S |S|(n-|S|)\fc{f}(S)^2$ is large, and the strategy just described gives good bounds.

\section{Preliminaries}

We view Boolean functions $f:\{0,1\}^n \to \{-1,1\}$ as residing in the vector space $\{f: \{0,1\}^n \to \bbC\}$. If we view the domain as the group $\bbF_2^n$, we can appeal to Fourier analysis, and express every $f:\{0,1\}^n \to \bbC$ (uniquely) as a linear combination of the characters of $\mathbb{F}_2^n$. That is every function $f: \bbF_2^n \to \bbC$ can be written as
$
 f= \sum_{S \subseteq [n]} \fc{f}(S) \chi_S,
$
where the characters $\chi_S$ are defined as
$
 \chi_S: x \mapsto  (-1)^{\sum_{i \in S} x_i},
$
and $\fc{f}(S) \in \bbC$ are their corresponding Fourier coefficients. Since the characters form an orthonormal basis for $\{f: \{0,1\}^n \to \bbC\}$, we have
$
 \fc{f}(S) = \ip{f,\chi_S} = \exd{x}{ f(x)\chi_S(x)}.
$

For a Boolean function $f$, we define $W_k[f] = \sum_{|S| = k} |\fc{f}(S)|^2$. We simply use $W_k$ when $f$ is clear from the context. For a symmetric function, we often write $f(k)$ for $f(x)$ with $\sum_i x_i = k$ and $k \in [n]$. We use $\binent$ to denote the binary entropy function $\binent(\alpha) = -\alpha \log(\alpha) - (1-\alpha) \log(1-\alpha)$. We will use the following simple estimates for binomial coefficients (See~\cite[Lemma 9.2]{MU05}): 
Let $\alpha \in [0,1]$ such that $\alpha n$ is an integer. Then
\begin{equation}
\label{eq:B2-1}
\sum_{k=0}^{\alpha n} \binom{n}{k} \leq 2^{n \binent(\alpha)},
\end{equation}
and
\begin{equation}
\label{eq:B2-2}
\frac{2^{n \binent(\alpha)}}{n+1} \leq \binom{n}{\alpha n}.
\end{equation}
If $\alpha \in [0,1/2]$ is arbitrary, then
\begin{equation}
\label{eq:B2-3}
\frac{2^{n \binent(\alpha)}}{n(n+1)} \leq \binom{n}{\floor{\alpha n}} \leq 2^{n \binent(\alpha)}.
\end{equation}

The following fact is also easy and classical. For every constant $c > 0$, there exists a constant $C > 0$ such that for any $n \geq 1$,
\begin{equation}
\label{eq:B3}
\binom{n}{\floor{n/2+c \sqrt{n}}} \geq C \frac{2^n}{\sqrt{n}}.
\end{equation}

\begin{definition}
\label{def:r}
For any $f : \{0,1\}^n \to \RR$, we define
\[
R(f) \eqdef \sum_{S \subseteq [n]} |S| (n-|S|) \fc{f}(S)^2.
\]
\end{definition}

For $a \in \mathbb{F}_2^n$, we define the derivative of $f:\mathbb{F}_2^n \to \mathbb{R}$ in the direction $a$ as
$$\Delta_af:x \mapsto f(x+a)-f(x).$$ 
Let $e_1,\ldots,e_n$ denote the standard vectors in $\mathbb{F}_2^n$, and let $f : \{0,1\}^n \to \RR$. For all $i \neq j$, define
\begin{equation}
\label{eq:second-deriv}
f_{ij} \defeq \Delta_{e_i+e_j}f.
\end{equation}

\begin{lemma} \label{lem:r}
For every $f:\{0,1\}^n \to \mathbb{R}$, we have
\[
\sum_{i \neq j} \ex{f_{ij}^2} = 8 R(f). 
\]
\end{lemma}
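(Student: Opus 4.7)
The plan is to Fourier-expand the second-order difference $f_{ij}$ directly, compute its $L_2$ norm via Parseval, and then count ordered pairs $(i,j)$ to match the weight $|S|(n-|S|)$.

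First I would write $f = \sum_S \hat{f}(S) \chi_S$ and use the multiplicativity $\chi_S(x+a) = \chi_S(a)\chi_S(x)$ to obtain the Fourier expansion of the directional derivative
\[
\Delta_a f(x) \;=\; \sum_S \hat{f}(S)\bigl(\chi_S(a)-1\bigr)\chi_S(x).
\]
Since the $\chi_S$ are orthonormal under the uniform measure, Parseval gives $\mathbb{E}[(\Delta_a f)^2] = \sum_S \hat{f}(S)^2 (\chi_S(a)-1)^2$.

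Next I would specialize to $a = e_i+e_j$ with $i\neq j$, where $\chi_S(e_i+e_j) = (-1)^{\mathbf{1}[i\in S]+\mathbf{1}[j\in S]}$. This factor equals $-1$ exactly when $|S\cap\{i,j\}|=1$ and equals $+1$ otherwise, so $(\chi_S(a)-1)^2$ equals $4$ in the first case and $0$ in the second. Therefore
\[
\mathbb{E}[f_{ij}^2] \;=\; 4 \sum_S \hat{f}(S)^2 \,\mathbf{1}\!\bigl[|S\cap\{i,j\}|=1\bigr].
\]

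Finally I would exchange the order of summation and count, for each fixed $S$, the number of ordered pairs $(i,j)$ with $i\neq j$ and $|S\cap\{i,j\}|=1$: choose the element inside $S$ in $|S|$ ways, the element outside $S$ in $n-|S|$ ways, then order them in $2$ ways, giving $2|S|(n-|S|)$. Substituting yields
\[
\sum_{i\neq j}\mathbb{E}[f_{ij}^2] \;=\; 4\sum_S \hat{f}(S)^2 \cdot 2|S|(n-|S|) \;=\; 8R(f).
\]
There is no real obstacle here; the only thing to be careful about is the ordered-versus-unordered count of pairs $(i,j)$, which accounts for the factor of $2$ that, combined with the factor $4$ from $(\chi_S(a)-1)^2$, produces the constant $8$ in the statement.
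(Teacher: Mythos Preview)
Your proof is correct and follows essentially the same route as the paper's: Fourier-expand $f_{ij}$, observe that only characters $\chi_S$ with $|S\cap\{i,j\}|=1$ survive (contributing a factor of $4$ via Parseval), then sum over ordered pairs $(i,j)$ to pick up the weight $2|S|(n-|S|)$. Your write-up is slightly more explicit about the ordered-pair count, but the argument is the same.
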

\begin{proof}
We have $$f_{ij}(x)= \sum_S \fc{f}(S)\chi_S(x) (\chi_S(e_i+e_j)-1)=\sum_{S: |S \cap \{i,j\}|=1} -2\fc{f}(S)\chi_S(x),$$
which by Parseval's identity implies 
$$\ex{f_{ij}^2}=\sum_{S: |S \cap \{i,j\}|=1} 4 \fc{f}(S)^2.$$
Summing over all pairs $i \neq j$, we obtain
\[
\sum_{i \neq j} \ex{f_{ij}^2} = 8\sum_{S \subseteq [n]} |S| (n-|S|) \fc{f}(S)^2.
\]
\end{proof}

\section{Proof of Theorem~\ref{thm:l1-sym}}
%
%
%
As mentioned earlier the upper bound is proved in Lemma \ref{lem:upper-bound}. The proof of the lower bound is divided into two parts: Lemma \ref{lem:lb-small-r} handles the case where $r$ is bounded away from $n/2$ and Lemma \ref{lem:lb-large-r} the case when $r$ is close to $n/2$.

\subsection{Upper Bound} \label{sec:upper-bound}

\begin{lemma} \label{lem:upper-bound}
For all $n \geq 1$ and every symmetric function $f : \{0,1\}^n \to \{-1,1\}$,
\[
\log \| \fc{f} \|_1 \leq 2 \cdot  r(f) \log(n/r(f)) + 3.
\]
\end{lemma}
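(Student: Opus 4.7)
The strategy is to write $f$ as a simple ``core'' symmetric function (either $\pm 1$ or $\pm\f{parity}$) that agrees with $f$ on the central slice $\{x:|x|\in[r_0,n-r_1]\}$, plus a correction supported on at most $r_0+r_1\leq 2r(f)$ outer Hamming-weight levels. Since $f$ is $\{-1,1\}$-valued, this correction has pointwise magnitude at most $2$ on each outer level, so by the triangle inequality the problem reduces to bounding $\|\widehat{\mathbf{1}_{|x|=k}}\|_1$ at the boundary values of $k$.

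The key estimate is $\|\widehat{\mathbf{1}_{|x|=k}}\|_1\leq\binom{n}{k}$. To prove it I would write $\mathbf{1}_{|x|=k}=\sum_{a\in\{0,1\}^n:\,|a|=k}\mathbf{1}_{x=a}$, factor each point indicator as $\mathbf{1}_{x=a}=\prod_{i=1}^n\mathbf{1}_{x_i=a_i}$, and observe that each one-variable factor has the explicit Fourier expansion $(1\pm\chi_{\{i\}})/2$ of spectral norm exactly $1$. Sub-multiplicativity of $\|\widehat{\,\cdot\,}\|_1$ under pointwise product gives $\|\widehat{\mathbf{1}_{x=a}}\|_1\leq 1$, and a triangle inequality over the $\binom{n}{k}$ choices of $a$ yields the estimate.

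Now assume without loss of generality that the first alternative in the definition of $r(f)$ holds, i.e.\ $f\equiv c\in\{-1,1\}$ on $\{|x|\in[r_0,n-r_1]\}$; the case in which $f\cdot\f{parity}$ is constant on the central slice is handled identically, starting from $f=c\cdot\f{parity}+\text{correction}$ and using $\|\widehat{\f{parity}}\|_1=1$. The decomposition
\[
f \;=\; c \;+\; \sum_{k<r_0}(f(k)-c)\,\mathbf{1}_{|x|=k} \;+\; \sum_{k>n-r_1}(f(k)-c)\,\mathbf{1}_{|x|=k},
\]
together with the triangle inequality and $|f(k)-c|\leq 2$, gives
\[
\|\fc{f}\|_1 \;\leq\; 1 + 4\sum_{k=0}^{r-1}\binom{n}{k},
\]
where $r=r(f)$.

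Finally, since $r<n/2$, the binomial tail estimate~\eqref{eq:B2-1} gives $\sum_{k=0}^{r-1}\binom{n}{k}\leq 2^{n\binent(r/n)}$, hence $\log\|\fc{f}\|_1\leq n\binent(r/n)+3$. The lemma then reduces to the elementary inequality $\binent(\alpha)\leq 2\alpha\log(1/\alpha)$ for $\alpha=r/n\in(0,1/2]$, equivalently $(1-\alpha)\log\tfrac{1}{1-\alpha}\leq\alpha\log\tfrac{1}{\alpha}$. This holds because the function $\phi(\alpha)=-\alpha\log\alpha$ satisfies $\phi(\alpha)\geq\phi(1-\alpha)$ throughout $(0,1/2]$: equality holds at $\alpha=1/2$, both sides vanish at $\alpha=0$ with $\phi(\alpha)$ dominating near $0$, and a short one-variable calculus check handles the interior. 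The main ``obstacle'' is really just this bookkeeping; the parity sub-case and the entropy comparison are both elementary.
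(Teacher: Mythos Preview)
Your proof is correct and follows essentially the same approach as the paper: decompose $f$ as a core function $p\in\{\pm 1,\pm\f{parity}\}$ plus a correction supported on the outer Hamming levels, bound the correction's spectral norm by $4\sum_{k<r}\binom{n}{k}$, apply the tail bound~\eqref{eq:B2-1}, and finish with $\binent(\alpha)\le 2\alpha\log(1/\alpha)$. The only cosmetic difference is that you bound $\|\widehat{\mathbf{1}_{|x|=k}}\|_1\le\binom{n}{k}$ via a decomposition into point indicators and sub-multiplicativity, whereas the paper bounds each individual coefficient $|\widehat{f-p}(S)|\le\frac{2}{2^n}\sum_{k\notin[r_0,n-r_1]}\binom{n}{k}$ and then sums over all $2^n$ subsets $S$; these two computations are the same estimate viewed from opposite ends.
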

\begin{proof}
By definition of $r_0$ and $r_1$,  there exists a function $p \in \{-1, 1, -\f{parity}, +\f{parity}\}$ such that $f(k) = p(k)$ for all $k \in [r_0, r_1]$. By linearity of the Fourier transform, we have for any $S \subseteq [n]$,
\begin{align*}
\fc{f}(S) &= \fc{p}(S) + \fc{f-p}(S) \\
		&= \fc{p}(S) + \frac{1}{2^n} \sum_{k=0}^{n} (f(k) - p(k)) \sum_{|x| = k} \chi_S(x) \\
		&= \fc{p}(S) + \frac{1}{2^n} \sum_{k=0}^{r_0-1} (f(k) - p(k)) \sum_{|x| = k} \chi_S(x) + \frac{1}{2^n} \sum_{k=n-r_1+1}^{n} (f(k) - p(k)) \sum_{|x| = k} \chi_S(x)		
\end{align*}
Thus,
\begin{align*}
|\fc{f}(S)|&\leq |\fc{p}(S)| + \frac{1}{2^n} \sum_{k=0}^{r_0-1} 2 \binom{n}{k} + \frac{1}{2^n} \sum_{k=n-r_1+1}^{n} 2 \binom{n}{k} \\
		&\leq |\fc{p}(S)| + 2 \cdot \frac{2^{h(r_0/n) n} + 2^{h(r_1/n) n} }{2^n}. 		
\end{align*}
For the last inequality, we used \eqref{eq:B2-1}. Summing over all subsets $S \subseteq [n]$, we get
\[
\| \fc{f} \|_1 \leq 1 + 2 (2^{\binent(r_0/n) n} + 2^{\binent(r_1/n) n}) \leq 1 + 4 \cdot 2^{h(r/n) n}.
\]
As $\binent(t) \leq - 2 t \log t$ when $t \leq 1/2$, we obtain
$
\log \| \fc{f} \|_1 \leq 3 + 2 r \log(n/r).
$
\end{proof}


\subsection{Lower Bound}

We start by making some simple observations.
\begin{lemma}\label{lem:bias-r}
Let $f:\{0,1\}^n \to \{-1,1\}$ be a symmetric function, and define $r_0 = r_0(f)$ and $r_1 = r_1(f)$. Then
\begin{equation}
\label{eq:lower-bound-r}
R(f) \geq \left( (n-r_0 + 1) ( n-r_0) \binom{n}{r_0-1} + (n - r_1 + 1) (n-r_1) \binom{n}{r_1 - 1} \right) 2^{-n}.
\end{equation}
Moreover, assuming that $f(s) = 1$ for all $s \in \{r_0, \dots, n-r_1\}$, we have
\begin{equation}
\label{eq:lower-bound-bias}
\sum_{S \neq \emptyset} \fc{f}(S)^2 \leq 4  \left(\sum_{s < r_0} \binom{n}{s} + \sum_{s < r_1} \binom{n}{s} \right) 2^{-n}.
\end{equation}
\end{lemma}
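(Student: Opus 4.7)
My plan is to handle the two inequalities by quite different methods. For the lower bound on $R(f)$, the strategy is to route through Lemma~\ref{lem:r} and lower-bound $\sum_{i \neq j} \ex{f_{ij}^2}$ by a direct combinatorial count of triples $(x,i,j)$ with $f_{ij}(x)^2 = 4$. Since $f$ is symmetric, I would first note that $f_{ij}(x) = 0$ whenever $x_i \neq x_j$ (because then $|x + e_i + e_j| = |x|$), and otherwise $f_{ij}(x) = \pm(f(|x| \pm 2) - f(|x|))$, so $f_{ij}(x)^2 = 4$ precisely when $f(|x|) \neq f(|x| \pm 2)$. Minimality of $r_0$ then forces $f(r_0 - 1) \neq f(r_0 + 1)$: in both the ``constant'' and ``$f \cdot \f{parity}$-constant'' regimes we have $f(k) = f(k+2)$ throughout $[r_0, n - r_1]$, so if $f(r_0 - 1)$ were equal to $f(r_0 + 1)$ the pattern would extend to $[r_0 - 1, n - r_1]$, contradicting minimality; the same argument gives $f(n - r_1 - 1) \neq f(n - r_1 + 1)$.

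The counting step then proceeds as follows. Triples $(x,i,j)$ contributing $f_{ij}(x)^2 = 4$ from the transition at $r_0$ split into two disjoint families: those with $|x| = r_0 - 1$ and $x_i = x_j = 0$, numbering $(n - r_0 + 1)(n - r_0)\binom{n}{r_0 - 1}$; and those with $|x| = r_0 + 1$ and $x_i = x_j = 1$, numbering $(r_0 + 1) r_0 \binom{n}{r_0 + 1}$, which coincides with the first via the identity $(r_0 + 1) r_0 \binom{n}{r_0 + 1} = (n - r_0 + 1)(n - r_0)\binom{n}{r_0 - 1}$. An identical count at $n - r_1$ yields a pair of terms each equal to $(n - r_1 + 1)(n - r_1)\binom{n}{r_1 - 1}$. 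The condition $r_0, r_1 < n/2$ guarantees $r_0 \neq n - r_1$, so the two transitions contribute disjoint triples and the counts add; dividing by $8 \cdot 2^n$ and applying Lemma~\ref{lem:r} yields~\eqref{eq:lower-bound-r}. The boundary cases $r_0 = 0$ or $r_1 = 0$ are automatic via $\binom{n}{-1} = 0$.

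For the upper bound~\eqref{eq:lower-bound-bias} the plan is much simpler and just uses Parseval. Writing $N_\pm = |\{x : f(x) = \pm 1\}|$, one has $\sum_{S \neq \emptyset} \fc{f}(S)^2 = 1 - \ex{f}^2 = 4 N_+ N_-/4^n \leq 4 N_-/2^n$, and the hypothesis $f = 1$ on $[r_0, n - r_1]$, combined with the symmetry $\binom{n}{k} = \binom{n}{n-k}$, gives $N_- \leq \sum_{s < r_0} \binom{n}{s} + \sum_{s < r_1} \binom{n}{s}$. The main obstacle I anticipate is in the lower-bound counting: making the minimality argument work uniformly across both the constant and parity-constant cases, and verifying that the triples counted at the $r_0$- and $r_1$-transitions are genuinely disjoint---precisely where the hypothesis $r_0, r_1 < n/2$ enters.
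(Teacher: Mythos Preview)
Your proposal is correct and follows essentially the same route as the paper: for~\eqref{eq:lower-bound-r} the paper also passes through Lemma~\ref{lem:r} and lower-bounds $\ex{f_{12}^2}$ by the contributions at Hamming weights $r_0-1$ and $n-r_1-1$ (your triple-counting is just the unsymmetrized form of this, and you are more explicit than the paper about why minimality forces $f(r_0-1)\neq f(r_0+1)$ and why the two contributions are disjoint); for~\eqref{eq:lower-bound-bias} the paper likewise bounds $\fc{f}(\emptyset)$ from below and invokes Parseval, which is exactly your $1-\ex{f}^2 = 4N_+N_-/4^n \le 4N_-/2^n$ calculation.
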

\begin{proof}
Define $f_{ij}$ as in \eqref{eq:second-deriv}. As $f$ is symmetric, we only need to consider $f_{12}$.
\begin{align*}
\ex{f_{12}^2} 
&= \exd{x_3\dots x_n}{ \frac{1}{4} \cdot \left( f_{12}^2(00 x_3 \dots x_n ) + f_{12}^2(01 x_3 \dots x_n ) + f_{12}^2(10 x_3 \dots x_n ) +  f_{12}^2(11 x_3 \dots x_n) \right) } \\
&= \frac{1}{4} \exd{x_3\dots x_n}{ \left(f(00 x_3 \dots x_n ) - f(11x_3 \dots x_n) \right)^2 + \left(f(11 x_3 \dots x_n) - f(00x_3 \dots x_n) \right)^2 } \\
&\geq \frac{1}{2} \left( \binom{n-2}{r_0-1} \cdot 2^{-(n-2)} \cdot 4 + \binom{n-2}{n- r_1 - 1} \cdot 2^{-(n-2)} \cdot 4 \right) \\
&= 8 \cdot \left(\frac{(n-r_0+1)(n-r_0)}{n(n-1)} \cdot \binom{n}{r_0-1} + \frac{(n-r_1+1)(n-r_1)}{n(n-1)} \cdot \binom{n}{r_1-1} \right) 2^{-n}.
\end{align*}
Inequality \eqref{eq:lower-bound-r} follows by applying Lemma \ref{lem:r}.

In order to establish inequality \eqref{eq:lower-bound-bias}, we show a lower bound on the principal Fourier coefficient of $f$:
\[
\fc{f}(\emptyset) \geq 1 - 2 \left(\sum_{s < r_0} \binom{n}{s} + \sum_{s > n-r_1} \binom{n}{s} \right) 2^{-n},
\]
which implies that
\[
\fc{f}(\emptyset)^2 \geq 1 - 4 \cdot \left(\sum_{s < r_0} \binom{n}{s} + \sum_{s < r_1} \binom{n}{s} \right) 2^{-n}.
\]
\end{proof}

\subsubsection{Lower Bound: $r \ll n/2$}
\begin{lemma}
\label{lem:lb-small-r}
For every symmetric function $f : \{0,1\}^n \to \{-1,1\}$ with $r = r(f)$,
\[
\log \| \fc{f} \|_1 \geq \Omega \left( \left(1 - \frac{2r-2}{n} \right) \cdot r \log(n/r)  \right).
\]
\end{lemma}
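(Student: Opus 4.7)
The goal is to leverage both parts of Lemma~\ref{lem:bias-r} simultaneously: \eqref{eq:lower-bound-r} forces $R(f)$ to be large while \eqref{eq:lower-bound-bias} says the total non-constant Fourier mass is small. Since the weight $|S|(n-|S|)$ in $R(f)$ is much larger on middle levels than on tail levels, this tension pins most of the non-constant mass on levels $k \in [r,n-r]$, where the binomial coefficient $\binom{n}{k}$ is enormous, so even modest $\ell_2$ mass there translates to a large spectral norm.

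First I would normalize. The operations $f\mapsto -f$ and $f\mapsto f\cdot\f{parity}$ leave $\|\fc f\|_1$ unchanged (the first negates all Fourier coefficients, the second permutes them $S\leftrightarrow [n]\setminus S$), so I may assume $f(s)=1$ for all $s\in[r_0,n-r_1]$; set $r=r(f)=\max(r_0,r_1)$. Then Lemma~\ref{lem:bias-r} directly gives
\[
R(f)\;\ge\;L\;:=\;(n-r+1)(n-r)\,\binom{n}{r-1}/2^n,\qquad \sum_{k\ge 1} W_k[f]\;\le\;B,
\]
where $B\le 4(\sum_{s<r_0}\binom{n}{s}+\sum_{s<r_1}\binom{n}{s})/2^n$, which I would bound by $C_1 r\binom{n}{r-1}/2^n$ using the estimate $\sum_{s<r}\binom{n}{s}\le r\binom{n}{r-1}$.

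Next I would localize the $R(f)$-mass to the middle. Splitting the sum $R(f)=\sum_k k(n-k)W_k$ into the tails $T'=\{1,\dots,r-1\}\cup\{n-r+1,\dots,n\}$ (where $k(n-k)\le(r-1)(n-r+1)$) and the middle $M=[r,n-r]$, I get
\[
\sum_{k\in M} k(n-k)\,W_k \;\ge\; L \;-\; (r-1)(n-r+1)\cdot B \;\ge\; c\,\bigl(1-\tfrac{2r-2}{n}\bigr)\,L,
\]
for an absolute constant $c>0$; this is precisely where the degradation factor $(1-(2r-2)/n)$ in the statement arises, since it expresses the ratio of $(n-r)^2$ to $(n-r)^2+O(r^2)$. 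Using $k(n-k)\le n^2/4$ on $M$, this converts to a lower bound on the $\ell_2$ mass carried by middle levels.

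Finally I would exploit the symmetry identity. Because $f$ is symmetric, every Fourier coefficient at level $k^*$ has the same magnitude $\sqrt{W_{k^*}/\binom{n}{k^*}}$, so
\[
\|\fc f\|_1\;\ge\;\sum_{|S|=k^*}|\fc f(S)|\;=\;\sqrt{\binom{n}{k^*}\,W_{k^*}}
\]
for any single $k^*$. Choosing $k^*$ close to $n/2$ makes $\binom{n}{k^*}\gtrsim 2^n/\sqrt n$, so even a very small $W_{k^*}$ suffices. I would average the middle-level bound over a small window of levels around $n/2$ (where all binomials are within a constant factor of $\binom{n}{n/2}$) to conclude that some such $k^*$ must carry mass $W_{k^*}\gtrsim \binom{n}{r-1}/(n\cdot 2^n)$, giving $\|\fc f\|_1\gtrsim \binom{n}{r-1}^{1/2}/n^{1/4}$. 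Taking logarithms and using $\log\binom{n}{r-1}\asymp r\log(n/r)$ then yields the claimed $\Omega((1-(2r-2)/n)\cdot r\log(n/r))$.

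\textbf{Main obstacle.} The delicate step is the last one: a naive pigeonhole over all $|M|=\Theta(n)$ middle levels loses an extra factor of $n$ in $W_{k^*}$, which would swamp the $(r-1)\log(n/r)$ gain when $r$ is small. The fix is to notice that the $k(n-k)$-weighting already concentrates the mass on the $O(\sqrt n)$ levels near $n/2$ (levels with $k(n-k)\ge(1-\epsilon)n^2/4$), so the averaging is only over $\sqrt n$ rather than $n$ levels, and the $\binom{n}{k^*}\gtrsim 2^n/\sqrt n$ bound applies uniformly across this window. Getting the constants in this concentration argument to line up with the $(1-(2r-2)/n)$ degradation, particularly as $r$ approaches $n/2$ from below, is the main technical work.
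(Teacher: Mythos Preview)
Your overall strategy---use Lemma~\ref{lem:bias-r} to pin Fourier mass on middle levels and then cash it in via $\|\fc f\|_1\ge\sqrt{\binom{n}{k}W_k}$---matches the paper's. But the execution has a genuine gap, and the paper's remedy is different from the one you sketch.

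\textbf{The cutoff at $r$ is too small.} With tails $T'=\{1,\dots,r-1\}\cup\{n-r+1,\dots,n\}$ the subtracted term is $(r-1)(n-r+1)B$. Even with the sharper geometric estimate $\sum_{s<r}\binom{n}{s}\le\frac{1-\alpha}{1-2\alpha}\binom{n}{r-1}$ (where $\alpha=(r-1)/n$), one finds
\[
\frac{(r-1)(n-r+1)\,B}{L}\;\approx\;\frac{8\alpha}{1-2\alpha},
\]
which already equals $4$ at $\alpha=1/4$; with your cruder bound $B\le C_1 r\binom{n}{r-1}/2^n$ the ratio is $\sim 8r^2/n$ and blows up past $r\sim\sqrt n$. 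So the inequality ``$\sum_{k\in M}k(n-k)W_k\ge c(1-\tfrac{2r-2}{n})L$'' is simply false once $r$ is a constant fraction of $n$, which is exactly the regime this lemma must cover before Lemma~\ref{lem:lb-large-r} takes over.

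\textbf{The concentration fix does not work as stated.} Even when the localization succeeds, you only know $\sum_{k\in M}W_k\gtrsim\binom{n}{r-1}/2^n$; plugging in the smallest binomial on $M$ gives $\|\fc f\|_1\gtrsim\binom{n}{r}/2^{n/2}$, which is exponentially small for $r<0.11n$. Your proposed repair---that the $k(n-k)$ weighting forces the mass onto $O(\sqrt n)$ levels near $n/2$---is not supported: the set $\{k:k(n-k)\ge(1-\eps)n^2/4\}$ has size $\Theta(\sqrt{\eps}\,n)$, not $\Theta(\sqrt n)$, and nothing you have derived forces $\eps=O(1/n)$. Your two constraints are consistent with all the mass sitting at a single level $k\approx r$.

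\textbf{What the paper does instead.} Both issues are solved by one move: the cutoff is placed not at $r$ but at an intermediate
\[
m_0=\Big\lfloor\tfrac{n}{2}\bigl(1-\sqrt{4\alpha_0-6\alpha_0^2+4\alpha_0^3}\bigr)\Big\rfloor,
\]
chosen so that (i) the tail weight $m_0(n-m_0)$ times $B$ is still beaten by $L$ (the computation ``$A_0\ge\binom{n}{r_0-1}2^{-n}(1-\alpha_0)(\alpha_0(1-2\alpha_0)n^2-n)$''), while (ii) $\binom{n}{m_0}$ is already large enough that the single inequality $\|\fc f\|_1\ge\sqrt{\binom{n}{m_0}\sum_{k=m_0}^{n-m_0}W_k}$ finishes the job---no pigeonhole over levels is needed. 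An entropy claim then shows $\binent(m_0/n)+\binent(\alpha_0)-1\ge c(1-2\alpha_0)\alpha_0\log(1/\alpha_0)$, which is where the factor $(1-\tfrac{2r-2}{n})$ actually enters. Small constant values of $r$ are handled by a separate direct calculation.
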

\begin{proof}
Observe that we can assume without loss of generality that $f(s) = 1$ for all $s \in \{r_0, \dots, n-r_1\}$. In fact, to handle the case $f = -1$ or $f = \pm \f{parity}$ in $[r_0,n-r_1]$, it suffices to multiply the function by $-1$ or by $\pm \f{parity}$, respectively. This does not affect the spectral norm of the function.

We prove the statement by showing that a significant portion of the $\ell_2$ mass of $\fc{f}$ sits in the middle levels from $m$ to $n-m$ for a well-chosen $m$ depending on $r(f)$.

Define $\alpha_0 = \frac{r_0 - 1}{n} < 1/2$ and $\alpha_1 = \frac{r_1 - 1}{n}$. We also let $m_0 = \floor{n/2 \cdot (1 - \sqrt{4 \alpha_0 - 6 \alpha_0^2 + 4 \alpha_0^3})}$ and $m_1 = \floor{n/2 \cdot (1 - \sqrt{4 \alpha_1 - 6 \alpha_1^2 + 4 \alpha_1^3})}$.

By Lemma \ref{lem:bias-r}, we have $\sum_{k > 0} W_k \leq 4 \cdot \left(\sum_{s < r_0} \binom{n}{s} + \sum_{s < r_1} \binom{n}{s} \right) 2^{-n}$. Let $U_k$ and $V_k$ be so that $W_k = U_k + V_k$ and $\sum_{k > 0} U_k \leq 4 \cdot 2^{-n} \sum_{s < r_0} \binom{n}{s}$ and $ \sum_{k > 0} V_k \leq 4 \cdot 2^{-n} \sum_{s < r_1} \binom{n}{s} 2^{-n}$.
Our objective is now to obtain a lower bound on $\sum_{k=m_0}^{n-m_0} k (n-k) U_k + \sum_{k=m_1}^{n-m_1} k (n-k) V_k$ using Lemma \ref{lem:bias-r}
\begin{align}
&\sum_{k=m_0}^{n-m_0}  k (n - k)  U_k + \sum_{k=m_1}^{n-m_1} k (n-k) V_k \; = \; R(f) - \sum_{k \notin [m_0, n-m_0]} k (n-k) U_k - \sum_{k \notin [m_1, n-m_1]} k (n-k) V_k \notag \\
&\quad \geq (n-r_0) (n-r_0+1) \binom{n}{r_0-1} 2^{-n} - (m_0-1)(n-m_0+1) 4 \cdot 2^{-n} \sum_{s < r_0} \binom{n}{s} \notag \\
&\qquad + (n-r_1) (n-r_1+1) \binom{n}{r_1-1} 2^{-n} - (m_1-1)(n-m_1+1) 4 \cdot 2^{-n} \sum_{s < r_1} \binom{n}{s}.
\label{eq:a0-a1}
\end{align}
Define $A_0 \eqdef (n-r_0) (n-r_0+1) \binom{n}{r_0-1} 2^{-n} - (m_0-1)(n-m_0+1) 4 \cdot 2^{-n} \sum_{s < r_0} \binom{n}{s}$, and let $A_1$ be its analogue
for $r_1$ so that the right hand side of \eqref{eq:a0-a1} equals $A_0+A_1$.

Observe that $\binom{n}{s} = \frac{s+1}{n-s} \binom{n}{s+1}$, and $\frac{s+1}{n-s} \leq \frac{r_0 - 1}{n-(r_0 - 1)} = \frac{\alpha_0}{1-\alpha_0}$ for $s < r_0-1$. Thus
\begin{align}
A_0 &\geq \binom{n}{r_0-1} 2^{-n} \left( (n - \alpha_0 n -1) (n - \alpha_0 n) - 4 (m_0-1) (n-m_0+1) \frac{1}{1-\alpha_0/(1-\alpha_0)} \right) \notag \\
&\geq \binom{n}{r_0-1} 2^{-n} \left( n^2 (1 - \alpha_0)^2 - (1-\alpha_0) n - 4 (m_0-1) (n-m_0+1) \frac{1-\alpha_0}{1-2\alpha_0} \right) \notag \\
&\geq \binom{n}{r_0-1} 2^{-n} \left( n^2 \left( (1 - \alpha_0)^2 - (1 - (4 \alpha_0 - 6 \alpha_0^2 + 4 \alpha_0^3)) \frac{1-\alpha_0}{1-2\alpha_0} \right) - (1-\alpha_0) n \right) \notag \\
&=\binom{n}{r_0-1} 2^{-n} (1-\alpha_0)  \left( n^2 \left( (1 - \alpha_0) - (1-2\alpha_0 + 2 \alpha_0^2) \right) - n \right) \notag \\
&=\binom{n}{r_0-1} 2^{-n} (1-\alpha_0)  \left( \alpha_0 (1-2\alpha_0) n^2  - n \right). \label{eq:a0}
\end{align}
Analogously, we have
\begin{align}
\label{eq:a1}
A_1 &\geq \binom{n}{r_1-1} 2^{-n} (1-\alpha_1)  \left( \alpha_1 (1-2\alpha_1) n^2  - n \right).
\end{align}
We now assume that $r_0 \geq r_1$. Observe that we then have $m_0 \leq m_1$. Combining \eqref{eq:a0-a1} and \eqref{eq:a0}, we get
\[
n^2 \sum_{k=m_0}^{n-m_0} W_k \geq \sum_{k=m_0}^{n-m_0} k (n-k) W_k \geq \binom{n}{r_0-1} 2^{-n} (1-\alpha_0)  \left( \alpha_0 (1-2\alpha_0) n^2  - n \right).
\]
Note that for symmetric functions $\| \fc{f} \|_1 = \sum_{k=0}^n \sqrt{\binom{n}{k} W_k}$, and thus
\begin{align}
\| \fc{f} \|_1 &\geq \sum_{k=m_0}^{n-m_0} \sqrt{\binom{n}{k} W_k}  \geq \sqrt{\binom{n}{m_0} \sum_{k=m_0}^{n-m_0} W_k} \notag \\
&\geq \sqrt{ \binom{n}{m_0} \binom{n}{r_0-1} 2^{-n} \frac{ (1-\alpha_0)  \left( \alpha_0 (1-2\alpha_0) n^2  - n \right) }{n^2} } \notag \\
&\geq \sqrt{ \binom{n}{\floor{n/2 (1 - \sqrt{4 \alpha_0 - 6 \alpha_0^2 + 4 \alpha_0^3})}
} \binom{n}{\alpha_0 n} 2^{-n} \frac{ (1-\alpha_0)  \left( \alpha_0 (1-2\alpha_0) n^2  - n \right) }{n^2} }. \label{eq:l1-lower-bound}
\end{align}
Using \eqref{eq:B2-2} and \eqref{eq:B2-3}, we obtain
\[
 \| \fc{f} \|^2_1 \geq \frac{2^{n \left(\binent\left(\frac{1}{2} - \frac{1}{2} \sqrt{4 \alpha_0 - 6 \alpha_0^2 + 4 \alpha_0^3}\right) + \binent(\alpha_0) - 1 \right) } }{ n(n+1)^2 } \cdot \frac{ (1-\alpha_0)  \left( \alpha_0 (1-2\alpha_0) n^2  - n \right) }{n^2}.
\]
As a result
\begin{align*}
\log \| \fc{f} \|_1 \geq \frac{n}{2} \left( \binent\left(\frac{1}{2} - \frac{1}{2}\sqrt{4 \alpha_0 - 6 \alpha_0^2 + 4 \alpha_0^3}\right) + \binent(\alpha_0) - 1 \right) + \frac{1}{2} \log \frac{ (1-\alpha_0)  \left( \alpha_0 (1-2\alpha_0) n^2  - n \right) }{n^3 (n+1)^2 }.
\end{align*}

\begin{claim}
There exists a constant $c>0$ such that for every $\alpha_0 \in (0,1/2)$,
\begin{equation}
\label{eq:claim}
\binent\left(\frac{1}{2} - \frac{1}{2}\sqrt{4 \alpha_0 - 6 \alpha_0^2 + 4 \alpha_0^3}\right) + \binent(\alpha_0) - 1 \geq c (1 - 2\alpha_0) \cdot \alpha_0 \cdot \log(1/\alpha_0).
\end{equation}
\end{claim}
\begin{proof}
Using the inequality $|h(x_2)-h(x_1)| \le h(x_2-x_1)$ which holds for every $0<x_1<x_2<1$, we have 
$$\binent\left(\frac{1}{2} - \frac{1}{2}\sqrt{4 \alpha_0 - 6 \alpha_0^2 + 4 \alpha_0^3}\right) + \binent(\alpha_0) - 1 
\ge \binent(\alpha_0)- \binent\left(\frac{1}{2}\sqrt{4 \alpha_0 - 6 \alpha_0^2 + 4 \alpha_0^3} \right).$$
By looking at the Taylor expansion, it is easy to see that there exists an $\epsilon>0$, such that for every $\alpha_0 \in [0,\epsilon] \cup \left[\frac{1}{2}-\epsilon,\frac{1}{2}\right]$ we have
$$ \binent(\alpha_0)- \binent\left(\frac{1}{2}\sqrt{4 \alpha_0 - 6 \alpha_0^2 + 4 \alpha_0^3} \right) \ge \frac{1}{2}(1 - 2\alpha_0) \cdot \alpha_0 \cdot \log(1/\alpha_0).$$
On the other hand, there exists a  constant $c_\epsilon>0$ such that  when $\alpha_0 \in (\epsilon,1/2-\epsilon)$, both $\binent(\alpha_0)- \binent\left(\frac{1}{2}\sqrt{4 \alpha_0 - 6 \alpha_0^2 + 4 \alpha_0^3} \right)$ and the right-hand side of (\ref{eq:claim}) belong to $[c_\epsilon,1]$. Taking $c \eqdef 1/c_\epsilon$ finishes the proof. 
\end{proof}
\comment{To prove that $|h(x_2)-h(x_1)| \le h(x_2-x_1)$, we can do the following. We have $h'(x) = \ln(1+x) - \ln(x)$ which is monotonically decreasing. This shows that $h(\lambda +\gamma) - h(\lambda) = \int_{\lambda}^{\lambda+\gamma} h'(x) dx$ is decreasing. So in particular, $h(x_2) - h(x_1) \leq h(x_2-x_1) - h(0)$.}
Using this claim, we obtain
\begin{align*}
\log \| \fc{f} \|_1 \geq c(1 - 2\alpha_0) \cdot \alpha_0 \log(1/\alpha_0) \cdot \frac{n}{2}  + \frac{1}{2} \log \frac{ (1-\alpha_0)  \left( \alpha_0 (1-2\alpha_0) n^2  - n \right) }{ n^3 (n+1)^2 }.
\end{align*}

This proves the desired result provided $r(f)$ is larger than some constant. Next we handle small (constant) values of $r(f)$. We start with the case $r(f) = 1$. In this case, it is easy to see that $\| \fc{f} \|_1 = O(1)$. Next, we consider $r(f) = 2$. Let $g_k(x) = -1$ iff $|x| = \sum_i x_i = k$. For the function $g_1$, we have for $S \neq \emptyset$,
\begin{align*}
\fc{g_1}(S) &= \frac{1}{2^n} \cdot - 2 \sum_{|x| = 1} \chi_S(x) \\
			&= \frac{-2}{2^n} \cdot  \sum_{i=1}^n (-1)^{\mathbf{1}_{i \in S}} \\
			&= \frac{-2}{2^n} (n - |S| - |S|) = \frac{-2 (n - 2 |S|)}{2^n}.
\end{align*}
Hence,
\begin{align*}
\| \fc{g_1} \|_1 &= 1 - 2\frac{n}{2^n} + \frac{2}{2^n} \sum_{k=1}^n \binom{n}{k} | n - 2k | \\
				&= \Theta(\sqrt{n}),
\end{align*}
by observing that a constant fraction of the probability mass of the binomial distribution lies in the interval $[n/2-2\sqrt{n}, n/2-\sqrt{n}]$. Similarly, one can show that $\| \fc{g_1} + \fc{g_{n-1}} \|_1 = \Theta(\sqrt{n})$. All other functions with $r(f) = 2$ are obtained from these two functions by adding functions $g_0$ or $g_n$ and by multiplying by a constant or the parity function.

We now consider the case $r(f) \geq 3$, but constant. We perform an analysis similar to the proof of Lemma \ref{lem:lb-small-r}. We can assume that $r_0 \geq r_1$. We take $m_0 = \floor{n/2(1 - \sqrt{5 \alpha_0 - 6 \alpha_0^2})}$. \comment{This number was picked so that $5 \alpha_0 - 6 \alpha_0^2$ is divisible by $1-2 \alpha_0$.}
As in \eqref{eq:a0}, we obtain the bound
\[
A_0 \geq \binom{n}{r_0 - 1} 2^{-n} (1-\alpha_0) (2 \alpha_0 n^2 - n).
\]
Hence, the analogue of inequality \eqref{eq:l1-lower-bound} becomes
\begin{align*}
\| \fc{f} \|_1
&\geq \sqrt{ \binom{n}{m_0} \binom{n}{r_0-1} 2^{-n} \frac{ (1-\alpha_0) (2 \alpha_0 n^2 - n) }{n^2} } \\
&\geq \sqrt{ \binom{n}{\floor{n/2(1 - \sqrt{5 \alpha_0 - 6 \alpha_0^2})}
} \binom{n}{\alpha_0 n} 2^{-n} \frac{ (1-\alpha_0) (2 \alpha_0 n^2 - n) }{n^2} }.
\end{align*}
But $\floor{n/2(1 - \sqrt{5 \alpha_0 - 6 \alpha_0^2})} = n/2 - \Theta(\sqrt{n})$ and thus $\binom{n}{\floor{n/2(1 - \sqrt{5 \alpha_0 - 6 \alpha_0^2})} } = \Omega(2^n/\sqrt{n})$ (see inequality \eqref{eq:B3}). As a result,
\begin{align*}
\| \fc{f} \|_1
&\geq \Omega\left( \sqrt{ \frac{1}{\sqrt{n}} \binom{n}{\alpha_0 n}\frac{1}{n}}\right) \\
&\geq \Omega\left( \sqrt{ \binom{n}{r_0 - 1} n^{-3/2} } \right),
\end{align*}
which proves the lemma.
\end{proof}

\subsubsection{Lower Bound: $r \approx n/2$}

For the case $r \approx n/2$, we use a result of \cite{OWZ11} that states that the derivative of a symmetric Boolean function is noise sensitive. Here, we use the noise sensitivity of the derivative $f_{ij}$. The following lemma is an analogue of \cite[Theorem 6]{OWZ11}.
\begin{lemma}
\label{lem:owz}
Let $f$ be a symmetric Boolean function and $f_{ij}$ be defined as in \eqref{eq:second-deriv}. Then for $\rho=1-c/n$, we have
\begin{equation}
\label{eq:g-noise-sensitive}
\sum_S \fc{f_{ij}}(S)^2 \rho^{|S|} \leq \frac{4}{\sqrt{\pi c}} \cdot \sum_S \fc{f_{ij}}(S)^2,
\end{equation}
for any $c \in [1,n]$. Summing over all $i,j$ with $i \neq j$, we get
\begin{equation}\label{eq:1}
8  \sum_S |S|(n-|S|)\fc{f}(S)^2 \rho^{|S|} \leq \frac{4}{\sqrt{\pi c}} \cdot 8 R(f).
\end{equation}
\end{lemma}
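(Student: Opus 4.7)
My plan is to prove inequality (\ref{eq:g-noise-sensitive}) first and then derive (\ref{eq:1}) from it by summing over pairs $i\neq j$ and using Parseval.

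The key observation for (\ref{eq:g-noise-sensitive}) is that the symmetry of $f$ imposes a very rigid structure on $f_{ij}$. If $x_i \neq x_j$, then $x$ and $x+e_i+e_j$ have the same Hamming weight, so $f(x+e_i+e_j)=f(x)$ and hence $f_{ij}(x)=0$. When $x_i=x_j=b$, one has $|x+e_i+e_j|=|x|+2(1-2b)$, and so one can write
$$f_{ij}(x) \;=\; (-1)^{x_i}\,\mathbf{1}[x_i=x_j]\,h(k), \qquad k=\textstyle\sum_{\ell\notin\{i,j\}} x_\ell,$$
where $h(k) := f(k+2)-f(k)$. Define $H:\{0,1\}^{n-2}\to\{-2,0,2\}$ by $H(x')=h(|x'|)$; then $H$ is a symmetric function on $n-2$ variables. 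A direct calculation of $\text{Stab}_\rho(f_{ij})=\mathbb{E}[f_{ij}(x)f_{ij}(y)]$ over a $\rho$-correlated pair $(x,y)$ factors through the four-valued variable $(x_i,x_j,y_i,y_j)$, whose contribution is $\rho/2$, and the independent $(x',y')$ part, yielding
$$\text{Stab}_\rho(f_{ij}) \;=\; \tfrac{\rho}{2}\,\text{Stab}_\rho(H), \qquad \|f_{ij}\|_2^2 \;=\; \tfrac{1}{2}\,\|H\|_2^2.$$
In particular the ratio $\text{Stab}_\rho(f_{ij})/\|f_{ij}\|_2^2$ equals $\rho\cdot\text{Stab}_\rho(H)/\|H\|_2^2$.

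The next step is to apply \cite[Theorem 6]{OWZ11} to $H$. Their theorem bounds the noise stability of $\Delta_i f$ for a symmetric Boolean $f$; the only properties of $\Delta_i f$ used in their proof are that it is symmetric on the remaining variables and takes values in $\{-2,0,2\}$, both of which $H$ also satisfies. Invoking their argument yields $\text{Stab}_\rho(H) \le (4/\sqrt{\pi c})\|H\|_2^2$ for $\rho = 1-c/n$, and combining this with the reduction above gives $\text{Stab}_\rho(f_{ij}) \le (4\rho/\sqrt{\pi c})\|f_{ij}\|_2^2 \le (4/\sqrt{\pi c})\|f_{ij}\|_2^2$, which is (\ref{eq:g-noise-sensitive}).

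For (\ref{eq:1}), I would use the identity (from the proof of Lemma~\ref{lem:r}) $\widehat{f_{ij}}(S) = -2\widehat{f}(S)\,\mathbf{1}[|S\cap\{i,j\}|=1]$. Summing over ordered pairs $i\neq j$ and counting $|\{(i,j):i\neq j,\ |S\cap\{i,j\}|=1\}|=2|S|(n-|S|)$ gives
$$\sum_{i\neq j}\sum_S \widehat{f_{ij}}(S)^2 \rho^{|S|} \;=\; 8\sum_S |S|(n-|S|)\,\widehat{f}(S)^2\,\rho^{|S|},$$
and likewise $\sum_{i\neq j}\|f_{ij}\|_2^2 = 8R(f)$ by Lemma~\ref{lem:r}. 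Summing (\ref{eq:g-noise-sensitive}) over $i\neq j$ then yields (\ref{eq:1}).

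The main obstacle is invoking OWZ's theorem for $H$ rather than for a single-coordinate derivative. Their proof is essentially a CLT/Gaussian-tail argument on the level sets of a symmetric function, and the argument does not rely on $\Delta_i f$ arising from flipping a single bit — only on its symmetry and bounded-integer range. So the verification that their proof transfers to $H$, while not difficult, is the one place that requires care.
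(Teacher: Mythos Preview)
Your reduction is clean and matches the paper's strategy: both arguments reduce to bounding the noise stability of the symmetric function $H(x')=h(|x'|)$ with $h(k)=f(k+2)-f(k)$, and your factorisation $\mathrm{Stab}_\rho(f_{ij})=\tfrac{\rho}{2}\,\mathrm{Stab}_\rho(H)$ is correct and slightly tidier than the paper's inequality $|\mathbb{E}_y[f_{12}(y)\mid x]|\le |\mathbb{E}_{y_3\cdots y_n}[f(11y_3\cdots y_n)-f(00y_3\cdots y_n)\mid x]|$. The derivation of (\ref{eq:1}) from (\ref{eq:g-noise-sensitive}) is also fine.

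The gap is precisely where you flag ``care'' but do not supply the missing ingredient. The proof of \cite[Theorem~6]{OWZ11} does \emph{not} use only symmetry and bounded range: it relies crucially on \cite[Lemma~1]{OWZ11}, which needs the nonzero values of the level function to be \emph{interleaving} in sign. For the single-bit derivative this holds automatically, since $f(k+1)-f(k)=2$ forces $f(k+1)=1$ and hence $f(k+2)-f(k+1)\le 0$. For your $h(k)=f(k+2)-f(k)$ it fails: e.g.\ if $f$ takes values $-1,-1,1,1$ on four consecutive integers then $h$ equals $2,2$ on two consecutive integers. Without interleaving, $|\sum_k p_k\,h(k)|$ can be as large as $\sum_k p_k$ rather than $\max_k p_k$, and the Gaussian-tail step collapses. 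The paper's fix is exactly the observation you are missing: restricted to the even indices (respectively the odd indices), the sets $h^{-1}(2)$ and $h^{-1}(-2)$ \emph{are} interleaving, because $h(k)=h(k+2)=2$ would force $f(k+4)-f(k)=4$. Splitting into even and odd parities and applying \cite[Lemma~1]{OWZ11} to each half yields $|\sum_k p_k\,h(k)|\le 2\max_k p_k$, and then the rest of the OWZ argument goes through with an extra factor of $2$ (this is the paper's Claim inside the proof). Once you insert this even/odd split, your proof is complete and essentially identical to the paper's.
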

\begin{proof}
The proof is the same as the proof of \cite[Theorem 6]{OWZ11} except that we use $f_{ij}$ instead of the derivative. We have
\[
\sum_S \fc{f_{ij}}(S)^2 \rho^{|S|} = \exd{x}{f_{ij}(x) \exd{y}{f_{ij}(y)}}
\]
where $x, y$ are $\rho$-correlated uniform random variables taking values in $\{0,1\}^n$. Note that we can write for any $x$
\begin{align*}
|\exd{y}{f_{12}(y)|x}| &= \left| \exd{y_3 \dots y_n}{\left(\pr{y_1y_2 = 00|x} - \pr{y_1y_2 = 11|x} \right) \left(f(11y_3 \dots y_n) - f(00y_3 \dots y_n) \right)|x} \right| \\
	&\leq \left| \exd{y_3 \dots y_n}{ f(11y_3 \dots y_n) - f(00y_3 \dots y_n)|x} \right|.
\end{align*}
To find an upper bound for this expression, it suffices to replace the use of \cite[Lemma 1]{OWZ11} by the following claim.
\begin{claim}
\label{lem:unimodular}
Let $E = \{i \in [m] : i \equiv 0  \mod 2\}$ and $O = \{i \in [m] : i \equiv 1 \mod 2\}$. Let $p_1, \dots, p_m$ be a non-negative unimodal sequence and $g : [m] \to \{-1,0,1\}$ with the property that the sets $g^{-1}(1) \cap E$ and $g^{-1}(-1) \cap E$ are interleaving, and the sets $g^{-1}(1) \cap O$ and $g^{-1}(-1) \cap O$ are interleaving. Then $|\sum_{i=1}^m p_i g(i)| \leq 2 \max \{p_i\}$.
\end{claim}
To prove the claim, we simply write $|\sum_{i=1}^m p_i g(i)| \leq |\sum_{i \in O} p_i g(i)| + | \sum_{i \in E} p_i g(i)|$. Now \cite[Lemma 1]{OWZ11} implies that  each term is upper-bounded by $\max \{p_i\}$.
\end{proof}

We are now ready to prove the following result.
\begin{lemma}
\label{lem:lb-large-r}
There exists a constant $\gamma < 1/2$ such that for any symmetric Boolean function $f$ with $r(f) \geq \gamma n$, we have
$
\log \| \fc{f} \|_1 = \Omega(n).
$
\end{lemma}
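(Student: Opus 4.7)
The plan is to use Lemma~\ref{lem:owz} (applied both to $f$ and to $f\cdot\f{parity}$) to force a constant fraction of the mass of $R(f)$ onto the ``middle'' Fourier levels $[m,n-m]$, and then convert this $\ell_2$ mass into an $\ell_1$ lower bound via Cauchy--Schwarz. As in the proof of Lemma~\ref{lem:lb-small-r}, we may assume $f(s)=1$ for all $s\in[r_0,n-r_1]$ (multiplying $f$ by $\pm 1$ or by $\pm\f{parity}$ preserves $\|\fc f\|_1$, $R(f)$, and symmetry) and, WLOG, that $r_0=r(f)\geq \gamma n$. Lemma~\ref{lem:bias-r} together with \eqref{eq:B2-3} then gives
\[
R(f) \;\geq\; (n-r_0+1)(n-r_0)\binom{n}{r_0-1}2^{-n} \;=\; \Omega\!\left(2^{n(\binent(\gamma)-1)}\right).
\]

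Next, note that $p:=f\cdot\f{parity}$ is also symmetric and satisfies $p_{ij}=f_{ij}\cdot\f{parity}$, so $\fc{p_{ij}}(S)=\fc{f_{ij}}([n]\setminus S)$. Applying Lemma~\ref{lem:owz} to both $f$ and $p$, adding the two resulting instances of \eqref{eq:1}, and using $R(p)=R(f)$ yields, with $\rho:=1-c/n$ and $\epsilon:=8/\sqrt{\pi c}$ for $c\in[1,n]$,
\[
\sum_S |S|(n-|S|)\fc f(S)^2\bigl(\rho^{|S|}+\rho^{n-|S|}\bigr) \;\leq\; \epsilon\, R(f).
\]
Set $\tau(k):=\rho^k+\rho^{n-k}$, which is symmetric about $n/2$ and decreasing on $[0,n/2]$ (its derivative in $k$ is $(\ln\rho)(\rho^k-\rho^{n-k})<0$ there). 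Split $R(f)=L+T$ at a threshold $m\leq n/2$ with $L:=\sum_{m\leq|S|\leq n-m}|S|(n-|S|)\fc f(S)^2$ and $T:=R(f)-L$. Since $\tau(k)\geq \tau(m)$ on the two tails, the displayed inequality gives $\tau(m)\,T\leq \epsilon(L+T)$. Choosing $m$ with $\tau(m)\geq 2\epsilon$ therefore forces $T\leq L$, and hence $L\geq R(f)/2$.

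To convert this to an $\ell_1$ bound, $|S|(n-|S|)\leq n^2/4$ gives $\sum_{m\leq k\leq n-m}W_k\geq 4L/n^2\geq 2R(f)/n^2$, and then for symmetric $f$,
\[
\|\fc f\|_1 \;=\; \sum_{k=0}^n\sqrt{\binom{n}{k}W_k} \;\geq\; \sqrt{\binom{n}{m}\sum_{m\leq k\leq n-m}W_k} \;\geq\; \sqrt{\binom{n}{m}\cdot\tfrac{2R(f)}{n^2}}.
\]
Plugging in $\binom{n}{m}\geq 2^{n\binent(m/n)}/\poly(n)$ from \eqref{eq:B2-3} and the lower bound on $R(f)$ above yields
\[
\log\|\fc f\|_1 \;\geq\; \tfrac{n}{2}\bigl(\binent(\gamma)+\binent(m/n)-1\bigr) - O(\log n).
\]
To finish, fix $c$ to be a large absolute constant so that $\epsilon=8/\sqrt{\pi c}$ is small, and take $m=\alpha n$ for a positive constant $\alpha$ satisfying $\rho^{\alpha n}\approx e^{-c\alpha}\geq 2\epsilon$ (so $\tau(m)\geq 2\epsilon$); then pick $\gamma<1/2$ close enough to $1/2$ that $\binent(\gamma)+\binent(\alpha)>1$, which is possible since $\binent(\gamma)\to 1$ as $\gamma\to 1/2^-$.

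The main obstacle is the tight interplay between these three parameters: the noise-sensitivity bound forces $\alpha=O(\log c / c)$, so $\binent(\alpha)$ is necessarily small, and $\gamma$ must be pushed close to $1/2$ to compensate. That some $\gamma<1/2$ works is clear in principle by continuity of $\binent$ at $1/2$; a concrete check (e.g.\ $c\approx 250$, $\alpha\approx 0.002$, $\gamma\approx 0.42$, for which $\binent(\gamma)+\binent(\alpha)>1$) confirms achievability, yielding $\log\|\fc f\|_1=\Omega(n)$.
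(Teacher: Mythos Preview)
Your proof is correct and follows essentially the same route as the paper: apply Lemma~\ref{lem:owz} to both $f$ and $f\cdot\f{parity}$ to bound $\sum_S |S|(n-|S|)\fc f(S)^2(\rho^{|S|}+\rho^{n-|S|})$, use this to push a constant fraction of $R(f)$ onto levels $[m,n-m]$, and finish via Cauchy--Schwarz and the lower bound on $R(f)$ from Lemma~\ref{lem:bias-r}. Your tail/middle splitting (``$\tau(m)T\leq \epsilon(L+T)$, so $\tau(m)\geq 2\epsilon$ forces $L\geq R(f)/2$'') is a slightly slicker packaging of the same inequality the paper derives; the only quibble is that your concrete constants are tight once you replace the heuristic $\rho^{\alpha n}\approx e^{-c\alpha}$ by the rigorous lower bound $(1-c/n)^{\alpha n}\geq e^{-c\alpha}/2$ used in the paper, but this is easily absorbed by choosing $c$ somewhat larger (and small $n$ are handled trivially since $\|\fc f\|_1\geq 1$).
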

\begin{proof}
Let $\rho = 1-c/n$ where $c$ is a constant chosen later, and let $n$ be large enough so that $\rho \geq 1/2$. We apply \eqref{eq:1} to $g \defeq f \cdot \f{parity}$:
\[
 \sum_S |S|(n-|S|)\fc{g}(S)^2 \rho^{|S|} \leq \frac{4}{\sqrt{\pi c}} \cdot R(g).
\]
Note that $\f{parity}=\chi_{[n]}$ which shows $\fc{f}([n]\setminus S) = \fc{g}(S)$ for all $S$, and in particular $R(g) = R(f)$. So we can rewrite the above inequality as
\begin{equation}\label{eq:2}
 \sum_S |S|(n-|S|)\fc{f}(S)^2 \rho^{n-|S|} \leq \frac{4}{\sqrt{\pi c}} \cdot R(f).
\end{equation}
Summing \eqref{eq:1} and \eqref{eq:2}, we get
\begin{equation}\label{eq:3}
 \sum_S |S|(n-|S|)\fc{f}(S)^2 (1- \rho^{|S|} - \rho^{n-|S|}) \geq \left(1-\frac{8}{\sqrt{\pi c}}\right)R(f).
\end{equation}
Let $\beta < 1/2$ be a positive constant to be chosen later. We have
\begin{align*}
  \sum_{\substack{|S| \leq \beta n }} |S|(n-|S|)\fc{f}(S)^2 (\rho^{|S|} + \rho^{n-|S|})
  & \geq \sum_{\substack{|S| \leq \beta n }} |S|(n-|S|)\fc{f}(S)^2 (\rho^{\beta n} + \rho^{(1-\beta)n}) \\
  & \geq \sum_{\substack{|S| \leq \beta n }} |S|(n-|S|)\fc{f}(S)^2 (1/2 \cdot e^{-c\beta} + 1/2 \cdot e^{-c(1-\beta)}).
\end{align*}
For the first equality, we used the fact that $\rho^{|S|} + \rho^{n-|S|}$ is decreasing in $|S|$ for $|S| \leq n/2$. For the second inequality, we used the inequality $(1-c/n)^{\beta n} \geq e^{-c\beta}/2$ when $1-c/n \geq 1/2$. Similarly, we have
\begin{align*}
  \sum_{\substack{|S| \geq (1-\beta) n }} |S|(n-|S|)\fc{f}(S)^2 (\rho^{|S|} + \rho^{n-|S|})
  & \geq \sum_{\substack{|S| \geq (1-\beta) n }} |S|(n-|S|)\fc{f}(S)^2 (e^{-c\beta}/2 + e^{-c(1-\beta)}/2).
\end{align*}
Summing the two inequalities, we obtain
\begin{align*}
  \sum_{|S| \not\in (\beta n, (1-\beta)n)} |S|(n-|S|)\fc{f}(S)^2 (\rho^{|S|} + \rho^{n-|S|}) 
 & \geq \frac{e^{-c\beta} +e^{-c(1-\beta)}}{2}  \sum_{|S| \not\in (\beta n, (1-\beta)n)} |S|(n-|S|)\fc{f}(S)^2.
\end{align*}
Combining this with (\ref{eq:3}), we obtain
\begin{align*}
& \sum_{\beta n \leq |S| \leq (1-\beta)n} |S|(n-|S|)\fc{f}(S)^2 (1- \rho^{|S|} - \rho^{n-|S|}) \\
&= \sum_{S} |S|(n-|S|)\fc{f}(S)^2 (1- \rho^{|S|} - \rho^{n-|S|}) - \sum_{|S| \not\in (\beta n, (1-\beta)n)} |S|(n-|S|)\fc{f}(S)^2 (1 - \rho^{|S|} - \rho^{n-|S|})\\
&\geq (1-\frac{8}{\sqrt{\pi c}}) R(f) - (1 - e^{-c\beta}/2 - e^{-c(1-\beta)}/2) \sum_{|S| \not\in (\beta n, (1-\beta)n)} |S|(n-|S|)\fc{f}(S)^2.
\end{align*}
As $e^{-c\beta}/2 + e^{-c(1-\beta)}/2 < 1$, this leads to
\[
\sum_{\beta n \leq |S| \leq (1-\beta)n} |S|(n-|S|)\fc{f}(S)^2 (1- \rho^{|S|} - \rho^{n-|S|})
\geq \left( e^{-c\beta}/2 + e^{-c(1-\beta)}/2 - \frac{8}{\sqrt{\pi c}}\right) R(f).
\]
Consequently,
\begin{align*}
 \frac{n^2}{4} \sum_{\beta n \leq |S| \leq (1-\beta)n} \fc{f}(S)^2
 		&\geq R(f) (e^{-c\beta}/2 + e^{-c(1-\beta)}/2 - 8/\sqrt{\pi c}).
\end{align*}
By picking $c = 10^4$ and $\beta = 10^{-4}\ln 2$, we have
$
\frac{e^{-c \beta} + e^{-c(1-\beta)}}{2} - \frac{8}{\sqrt{\pi c}} \geq \frac{1}{10}.
$
We conclude that $\sum_{\beta n \leq k \leq (1-\beta) n} W_k \geq \frac{4R(f)}{10 n^2}$, and thus
\[
\| \fc{f} \|_1 = \sum_{k=0}^n \sqrt{ \binom{n}{k} W_k} \geq \sqrt{ \binom{n}{\beta n} R(f) \frac{4}{10n^2} }.
\]
Using  \eqref{eq:lower-bound-r}, it follows that
\[
\| \fc{f} \|_1 = \Omega \left( \sqrt{\binom{n}{\beta n} \binom{n}{r-1} 2^{-n} } \right)
			= \Omega \left( 2^{(\binent(\beta) + \binent(\alpha) - 1) \frac{n}{2} } (n+1)^{-1}\right),
\]
where $\alpha = (r-1)/n$. If  $\alpha$ is such that $\binent(\alpha) \geq 1 - \binent(\beta)/2$, we obtain the desired bound $\log \| \fc{f} \|_1 = \Omega(n)$.
\end{proof}

\section{Proof of Corollary \ref{cor:par-dec-tree}} \label{sec:cor}
We start by observing that we can assume that $f(x)$ is constant whenever $|x| \in [r_0, n-r_1]$. In fact, if this is not the case, then $f \cdot \f{parity}(x)$ will be constant when $|x| \in [r_0, n-r_1]$. But $f(x)$ can be computed from $f \cdot \f{parity}(x)$ using only one query to $\f{parity}(x)$, which multiplies the size of the tree by at most $2$. In the remainder of the proof, we assume $f(x)$ is constant for $|x| \in [r_0, n-r_1]$.

We start by proving the lower bound. It is simple to prove that $\| \fc{f} \|_1$ is a lower bound on the parity decision tree size of $f$ \cite[Lemma 5.1]{KM91}. For completeness, we provide a sketch of a proof. As all the possible inputs that lead to some leaf $L$ have the same value for $f$, we can write $f$ as a sum over all leaves of the tree $f(x) = \sum_{L} f(L) \mathbf{1}_{L}(x)$, where the function $\mathbf{1}_L$ takes value $1$ if the input belongs to the leaf $L$ and is $0$ otherwise. By linearity of the Fourier transform and the triangle inequality, we have
$\| \fc{f} \|_1 \leq \sum_L |f(L)| \| \fc{\mathbf{1}_{L}} \|_1$. Now observe that the inputs corresponding to $L$ (that we also call $L$) are inputs that satisfy some parity conditions on subsets belonging to some subspace $\cS$. Then, we have $\fc{\mathbf{1}_{L}}(S) = \pm \frac{|L|}{2^n}$ for any $S \in \cS$. Note that the number of such subsets is $2^n/|L|$. But if $S \notin \cS$, then $\sum_{x \in L} \chi_S(x) = 0$. It follows that $\| \fc{\mathbf{1}_L} \|_1 = 1$ and that $\| \fc{f} \|_1$ is a lower bound on the size of the tree.

Using Theorem \ref{thm:l1-sym}, this proves the lower bound stated in Corollary \ref{cor:par-dec-tree}, except in the case where $r(f) = 1$. For this case, observe that we can assume that a leaf at depth $d$ corresponds to $2^{n-d}$ possible inputs; see e.g., \cite[Lemma 5.1]{KM91}. So we have at most two input bit strings that have a value for $f$ that is different from the value $f$ takes when $x \in [r_0(f), n-r_1(f)]$. This proves that the depth of the tree is at least $n-1$ and completes the proof of the lower bound.

For the upper bound, we give a decision tree of size at most $4 \binom{n}{r_0(f)} + 4 \binom{n}{r_1(f)}$ for computing $f$. We start by considering a complete binary tree of depth $n$. Level $i$ of the tree corresponds to querying the $i$-th input bit $x_i$. The number of leaves of the tree is $2^{n}$. Clearly, one can compute any function using such a tree. We are going to use the values $r_0(f)$ and $r_1(f)$ to remove unnecessary nodes from the tree. Note that each node at level $i$ can be labeled by a bit string of length $i$. We remove all the nodes that have $r_0$ ones and at least $r_1$ zeros, and the nodes that have $r_1$ zeros and at least $r_0$ ones, together with all their children. All of these nodes correspond to inputs $x$ for which $|x| \in [r_0, n-r_1]$, so the value of $f$ is a constant that only depends on $f$.

It now remains to compute the number of leaves of the constructed decision tree. The number of leaves at a level $i < n$ is $0$ if $i < r_0 + r_1$ and $\binom{i}{r_0-1} + \binom{i}{r_1-1}$ if $i \geq r_0 + r_1$. At level $n$, we have all the remaining nodes that can have at most $r_0$ ones or at most $r_1$ zeros, thus at most $\binom{n}{r_0} + \binom{n}{r_1}$ leaves. Thus, the total number of leaves is at most
\begin{align*}
\sum^{n-1}_{i=r_0+r_1} \binom{i}{r_0-1} + \binom{i}{r_1-1} + \binom{n}{r_0} + \binom{n}{r_1}
&\leq \sum_{i=r_0-1}^{n-1} \binom{i}{r_0-1} + \sum_{i=r_1-1}^{n-1} \binom{i}{r_1-1} + \binom{n}{r_0} + \binom{n}{r_1} \\
&= 2 \cdot \left( \binom{n}{r_0} + \binom{n}{r_1} \right).
\end{align*}
We can then obtain the stated result by \eqref{eq:B2-3} and the fact that $h(x) \le -2 x \log x$ for $x \in (0,1/2]$.

\section{Conclusion and Future Work}\label{sec:futurework}
A natural next step is to extend Theorem \ref{thm:l1-sym} to \emph{approximate} spectral norm. Indeed this would have  interesting implications. Recall that the $\eps$-approximate spectral norm of a Boolean function $f$ is the smallest spectral norm of a function $g$ with $\norm{f-g}_\infty \leq \eps$, i.e., for all $x$, $|f(x) - g(x)| \leq \eps$. Trivially $\norm{\fc{f}}_{1,\eps}$ is smaller than $\norm{\fc{f}}_1$. We conjecture that it cannot be much smaller.

\begin{conjecture}\label{conj:approximateell1}
For all symmetric functions $f:\{0,1\}^n \to \{\pm 1\}$,
\[
 \log \norm{\fc{f}}_1 = \Theta^*(\log \norm{\fc{f}}_{1,1/3})
\]
where $\Theta^*$ suppresses $O(\log n)$ factors.
\end{conjecture}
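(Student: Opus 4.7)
The easy direction $\log\norm{\fc{f}}_{1,1/3} \leq \log \norm{\fc{f}}_1$ is immediate from the definition. The task is therefore to prove the matching lower bound $\log \norm{\fc{f}}_{1,1/3} = \Omega(r(f) \log(n/r(f)))$ up to polylog$(n)$ factors; combined with Theorem~\ref{thm:l1-sym} this yields the conjecture.

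The plan is first to symmetrize: if $g : \{0,1\}^n \to \mathbb{R}$ attains $\norm{\fc{f}}_{1,1/3}$, replace it by $g^{\mathrm{sym}}(x) = \ex{g(\pi x)}$ averaged over a uniformly random permutation $\pi$ of the coordinates. Then $g^{\mathrm{sym}}$ is symmetric, still satisfies $\norm{f-g^{\mathrm{sym}}}_\infty \leq 1/3$, and by the triangle inequality applied inside each orbit of subsets $S \subseteq [n]$ under coordinate permutations, $\norm{\fc{g^{\mathrm{sym}}}}_1 \leq \norm{\fc{g}}_1$. So one may assume $g$ is symmetric. Next, by the standard $\ell_1/\ell_\infty$ Fourier duality $\norm{\fc{h}}_1 = \sup\{\ip{h,F} : \norm{\fc{F}}_\infty \leq 1\}$ and a minimax exchange (valid by convexity and compactness on bounded regions), one obtains the dual characterization
\[
\norm{\fc{f}}_{1,\epsilon} = \sup \bigl\{\, \ip{f,F} - \epsilon \cdot \ex{|F|} \ : \ F : \{0,1\}^n \to \mathbb{R},\ \norm{\fc{F}}_\infty \leq 1 \,\bigr\},
\]
and, symmetrizing $F$ just as we did $g$, one may restrict the supremum to symmetric $F$. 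Proving the conjectured lower bound is therefore equivalent to constructing, for each symmetric $f$, a symmetric dual witness $F$ with $\norm{\fc{F}}_\infty \leq 1$, large correlation $\ip{f,F}$, and moderately small $\ex{|F|}$.

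A natural template, motivated by the proof of Theorem~\ref{thm:l1-sym}, is to set $\fc{F}(S) = \sigma_{|S|} \in \{-1,0,+1\}$ aligned with $\sign(\fc{f}(S))$ on the dominant range of levels $|S| \in [r(f)-1, m_0]$ identified in Lemma~\ref{lem:lb-small-r}, and zero elsewhere. With this choice $\ip{f,F}$ is comparable to the contribution of these levels to $\norm{\fc{f}}_1$, so it remains to control $\ex{|F|} = 2^{-n}\sum_k \binom{n}{k}|F(k)|$, where $F(k) = \sum_i \sigma_i K_i(k;n)$ is a signed combination of Krawtchouk polynomials. Since a symmetric Boolean $f$ has Fourier coefficients expressible as Krawtchouk-type sums with jumps at $r_0$ and $n-r_1$, one expects most of the mass of $F(k)$ to concentrate near $k \approx r_0$ and $k \approx n-r_1$, where the binomial weights $\binom{n}{k}$ are much smaller than at the middle levels, which would give the required estimate.

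The principal obstacle is proving sharp quantitative Krawtchouk-polynomial cancellation of this type in the tails, uniformly over the full range $1 \leq r(f) \leq n/2$. I expect that the relevant techniques will come from the dual-polynomial machinery developed for approximate-degree lower bounds on symmetric functions (Paturi; Sherstov; Bun--Thaler), and that this adaptation is where the $\polylog(n)$ loss hidden in the $\Theta^*$ of the conjecture naturally enters. One plausible sanity check is the regime $r(f) = \Omega(n)$: here one might hope to route the argument of Lemma~\ref{lem:lb-large-r} through the approximator $g$ directly, using $\ex{(f-g)^2} \leq 1/9$ to bound $R(g-f) \leq n^2/9$ and the seminorm property of $\sqrt{R(\cdot)}$ to transfer the lower bound on $R(f)$ from Lemma~\ref{lem:bias-r} to $R(g)$. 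Even in this easier regime, however, the additive $n^2$ error from the perturbation is of the same order as $R(f)$ itself when $r(f)$ is only mildly larger than $n/2 - O(\sqrt{n})$, so closing this gap looks like it already requires genuinely new ideas beyond what is in Section~3.
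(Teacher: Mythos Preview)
The statement you are trying to prove is labelled \emph{Conjecture} in the paper, and the paper does not prove it: it is explicitly posed as an open problem in the concluding section, with only motivation and consequences discussed. There is therefore no ``paper's own proof'' to compare your proposal against.

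Your write-up is, correspondingly, not a proof but a strategy sketch that is candid about its own gaps. The preliminary reductions (symmetrizing the approximator $g$, passing to the LP dual, symmetrizing the dual witness $F$) are standard and correct. The proposed template for $F$ --- setting $\fc{F}(S)=\sign\fc{f}(S)$ on a band of levels and zero elsewhere --- is the natural first attempt, but as you yourself note, the entire content of the problem is then the estimate on $\ex{|F|}$, and you offer no argument for that step beyond the heuristic that the Krawtchouk mass ``should'' concentrate near $k\approx r_0$ and $k\approx n-r_1$. That is precisely the hard part, and invoking the dual-polynomial machinery of Paturi/Sherstov/Bun--Thaler is a pointer to a literature, not an argument. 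Your sanity check in the regime $r(f)=\Omega(n)$ is honest in concluding that even there the additive $n^2/9$ loss from $R(f-g)$ swamps the signal, so the route through Lemma~\ref{lem:bias-r} and Lemma~\ref{lem:lb-large-r} does not survive perturbation.

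In short: there is no proof in the paper to match, and your proposal is a reasonable outline of where the difficulty lies rather than a resolution of it.
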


We now discuss some of the applications of the above conjecture in conjunction with Theorem \ref{thm:l1-sym}.

\subsubsection*{Analog of Paturi's Result for Monomial Complexity}

A famous result of Paturi \cite{Pat92} characterizes the approximate degree of all symmetric functions. Recall that the degree of a function $f$ is the largest $|S|$ such that $\fc{f}(S)$ is non-zero. Let $t_0$ and $t_1$ be the minimum integers such that $f(i) = f(i+1)$ for all $i \in [t_0,n-t_1]$.
\begin{theorem}[\cite{Pat92}]
 Let $f:\{0,1\}^n \to \{\pm 1\}$ be a symmetric function and let $t_0$ and $t_1$ be defined as above. Then,
$\deg_{1/3}(f) = \Theta(\sqrt{n(t_0 + t_1)})$.
\end{theorem}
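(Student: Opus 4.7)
The first step is Minsky--Papert symmetrization: the approximate degree of a symmetric $f$ equals the minimum degree of a univariate polynomial $p$ defined on $\{0,1,\dots,n\}$ with $|p(k)-f(k)|\le 1/3$ for every $k$. Both the upper and lower bounds are then handled in this univariate setting.

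For the lower bound, the definition of $t_0$ and $t_1$ as minimal forces $f(t_0-1)\neq f(t_0)$ and $f(n-t_1)\neq f(n-t_1+1)$, so $|p(t_0)-p(t_0-1)|\ge 2/3$ and analogously near $n-t_1$. The plan is to appeal to a Bernstein--Markov type inequality in the form
\[
|p(k+1)-p(k)|\;\le\;\frac{C\cdot\deg(p)\cdot\max_{0\le j\le n}|p(j)|}{\sqrt{(k+1)(n-k)}},
\]
which can be derived by first extending boundedness from the integer grid to the whole interval $[0,n]$ via an Ehlich--Zeller / Rivlin estimate (valid as long as $\deg(p)=O(\sqrt n)$), then invoking the classical pointwise Bernstein inequality. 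Plugged in at $k=t_0-1$, with $|p(j)|\le 4/3$ and with $t_0\le n/2$ (WLOG), this yields $\deg(p)=\Omega(\sqrt{nt_0})$; the symmetric step gives $\deg(p)=\Omega(\sqrt{nt_1})$. The regime $\deg(p)\gg\sqrt n$ where Ehlich--Zeller fails is already handled: in that case $\deg(p)=\Omega(\sqrt n)$ is automatically stronger than the target bound whenever $\min(t_0,t_1)$ is bounded.

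For the upper bound, normalize so that $f\equiv +1$ on $[t_0,n-t_1]$ (multiplying by $-1$ if necessary, which does not change the approximate degree). The plan is to build
\[
p(x)\;=\;1+\sum_{k<t_0}(f(k)-1)\,q_k(x)+\sum_{k>n-t_1}(f(k)-1)\,q_k(x),
\]
where each ``bump'' $q_k$ satisfies $q_k(k)=1$, has magnitude at most $1/(10n)$ at every other integer in $\{0,\dots,n\}$, and has degree $O(\sqrt{n\cdot\max(k+1,n-k)})$. The standard construction of $q_k$ uses a rescaled Chebyshev polynomial $T_d$: $T_d$ is bounded by $1$ on $[-1,1]$ but grows as $(u+\sqrt{u^2-1})^d$ for $|u|>1$, so an affine map placing the image of $k$ just outside $[-1,1]$ while keeping the images of all other integers safely inside produces an exponential gap between $|T_d(\text{image of }k)|$ and $\max_{j\neq k}|T_d(\text{image of }j)|$; dividing by the former yields the desired $q_k$. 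Bumps on the left cost degree $O(\sqrt{nt_0})$ and bumps on the right cost $O(\sqrt{nt_1})$, so $p$ has total degree $O(\sqrt{n(t_0+t_1)})$.

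The main obstacle I anticipate is the quantitative error control in the upper bound: as many as $t_0$ left bumps and $t_1$ right bumps contribute at each integer, so every $q_k$ must decay by a factor of at least $\Omega(n)$ away from $k$. This is precisely where the $\sqrt{nk}$ degree becomes sharp: the Chebyshev escape rate corresponding to $|u|-1\asymp k/n$ produces geometric decay of rate $\exp(\Theta(d\sqrt{k/n}))$, which beats $n$ exactly when $d=\Theta(\sqrt{nk})$. A secondary subtlety is the Ehlich--Zeller step on the lower bound, whose constants must accommodate the regime where $t_0$ or $t_1$ approach $n/2$; there the target bound $\sqrt{n(t_0+t_1)}=\Theta(n)$ is instead established by a direct dual-certificate argument using Chebyshev weights concentrated on the transition region, sidestepping Ehlich--Zeller entirely.
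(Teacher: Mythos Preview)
The paper does not prove this theorem at all: it is quoted in the ``Conclusion and Future Work'' section as a known result of Paturi \cite{Pat92}, cited purely to motivate the analogous conjecture for approximate monomial complexity. There is therefore no proof in the paper to compare your proposal against.

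For what it is worth, your sketch is a faithful outline of the standard argument from Paturi's original paper: Minsky--Papert symmetrization reduces to a univariate problem; the lower bound combines an Ehlich--Zeller/Rivlin grid-to-interval bound with the pointwise Bernstein inequality applied at the jump locations $t_0-1$ and $n-t_1$; the upper bound is a sum of Chebyshev ``bumps'' $q_k$ localized at each non-constant integer $k$, each of degree $O(\sqrt{n\max(k,n-k)})$. Your identification of the two delicate points (error accumulation across $t_0+t_1$ bumps, and the breakdown of Ehlich--Zeller when the degree is $\Omega(\sqrt{n})$) is also accurate. But none of this appears in the present paper, which treats Paturi's theorem as a black box.
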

Paturi's result has found numerous applications in theoretical computer science \cite{Raz03,BBCMW01,She09,Wol08, She11}. 

The monomial complexity of a Boolean function $f$, denoted $\mon(f)$, is the number of non-zero Fourier coefficients of $f$. The monomial complexity appears naturally in various areas of complexity theory, and it is desirable to obtain simple characterizations for natural classes of functions. 
An argument similar to the one in~\cite{BS92} shows that $\mon_{\epsilon}(f)  \le \frac{2n}{\epsilon^2} \|\fc{f}\|_1^2$ for every $\epsilon>0$. Combining this with Conjecture \ref{conj:approximateell1} and Theorem \ref{thm:l1-sym} would show that $r(f)$ characterizes the approximate monomial complexity of $f$:
\begin{conjecture}[Consequence of Conjecture~\ref{conj:approximateell1}]
For a symmetric function $f:\{0,1\}^n \to \{\pm 1\}$,
$$
 \log \mon_{1/3}(f) = \Theta^*(r(f)).
$$
\end{conjecture}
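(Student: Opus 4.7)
The plan is to chain three ingredients: the Bruck--Smolensky style bound $\mon_\eps(g) \le (2n/\eps^2)\|\fc{g}\|_1^2$ recalled just above, the hypothesized equivalence $\log\|\fc{f}\|_1 = \Theta^*(\log \|\fc{f}\|_{1,1/3})$ from Conjecture~\ref{conj:approximateell1}, and the characterization $\log\|\fc{f}\|_1 = \Theta(r(f)\log(n/r(f)))$ from Theorem~\ref{thm:l1-sym}. The last of these already gives $\log\|\fc{f}\|_1 = \Theta^*(r(f))$, since $r \le r\log(n/r) \le r\log n$ for every $r \in \{1,\dots,n/2\}$ and $\Theta^*$ absorbs the trailing $\log n$. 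It therefore suffices to establish $\log \mon_{1/3}(f) = \Theta^*(\log \|\fc{f}\|_1)$.

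For the upper bound, I would pick a real-valued $g:\{0,1\}^n \to \bbR$ with $\|f-g\|_\infty \le 1/6$ realizing $\|\fc{g}\|_1 = \|\fc{f}\|_{1,1/6}$, and then apply the Bruck--Smolensky bound to $g$ at error $1/6$ to produce a sparse $h$ with $\mon(h) \le 72n\|\fc{g}\|_1^2$ and $\|g-h\|_\infty \le 1/6$. The triangle inequality gives $\|f-h\|_\infty \le 1/3$, so $\log\mon_{1/3}(f) \le 2\log\|\fc{f}\|_{1,1/6} + O(\log n)$. Since $\|\fc{f}\|_{1,1/6} \le \|\fc{f}\|_1$ trivially, and Conjecture~\ref{conj:approximateell1} gives $\log\|\fc{f}\|_1 = \Theta^*(\log\|\fc{f}\|_{1,1/3})$, this yields $\log\mon_{1/3}(f) = O^*(\log\|\fc{f}\|_1)$ as required.

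For the lower bound, let $g$ realize $\mon_{1/3}(f)$, with Fourier support $T$ of size $|T| = \mon_{1/3}(f)$. Since $\|g\|_\infty \le \|f\|_\infty + 1/3 \le 4/3$, Parseval gives $\|\fc{g}\|_2 \le 4/3$, and Cauchy--Schwarz yields $\|\fc{g}\|_1 \le \sqrt{|T|}\,\|\fc{g}\|_2 \le (4/3)\sqrt{\mon_{1/3}(f)}$. Since $\|\fc{f}\|_{1,1/3} \le \|\fc{g}\|_1$ by definition, we obtain $\log\mon_{1/3}(f) \ge 2\log\|\fc{f}\|_{1,1/3} - O(1) = \Omega^*(\log\|\fc{f}\|_1)$, once again via Conjecture~\ref{conj:approximateell1}.

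The main (mild) technical point is to verify that the Bruck--Smolensky random-sampling argument of~\cite{BS92}, typically stated for Boolean targets, goes through unchanged when fed a bounded real-valued $g$ with an arbitrary constant error $\eps$; this is routine since the argument only uses $\|\fc{g}\|_1$ and $\|g\|_\infty$. A second minor point is the implicit use of the $1/6$-analogue of Conjecture~\ref{conj:approximateell1} in the upper bound, which is free: $\|\fc{f}\|_{1,1/3} \le \|\fc{f}\|_{1,1/6} \le \|\fc{f}\|_1$, so the conjecture forces all three quantities to coincide up to $\Theta^*$.
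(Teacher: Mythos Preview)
Your argument is correct and matches the paper's (one-sentence) justification, which simply says to combine the Bruck--Smolensky bound $\mon_\eps(f) \le (2n/\eps^2)\|\fc{f}\|_1^2$ with Conjecture~\ref{conj:approximateell1} and Theorem~\ref{thm:l1-sym}; your Cauchy--Schwarz step is the natural way to fill in the lower-bound direction the paper leaves implicit. One minor simplification: for the upper bound the paper applies Bruck--Smolensky directly to the Boolean $f$, giving $\log\mon_{1/3}(f) \le 2\log\|\fc{f}\|_1 + O(\log n)$ without your detour through a $1/6$-approximator $g$ (and hence without needing the remark about the $1/6$-analogue of the conjecture).
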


\subsubsection*{Communication Complexity of Xor Functions}

Recall the Log Rank Conjecture mentioned in the introduction. This conjecture has an analogous version for the randomized communication complexity model: ``Log Approximation Rank Conjecture''. The $\eps$-approximate rank of a matrix $M$ is denoted by $\rank_\eps (M)$, and is the minimum rank of a matrix that $\eps$ approximates $M$. Denote by $\bfR^\eps(F)$ the $\eps$-error randomized communication complexity of $F$. It is known that $\bfR^\eps(F) \geq \log \rank_{\eps'}(M_F)$, where $\eps'$ is a constant that depends on $\eps$ and $M_F$ is the matrix representation of $F$. Log Approximation Rank Conjecture states that this lower bound is tight:

\begin{conjecture}[Log Approximation Rank Conjecture]
 There is a universal constant $c$ such that for any 2 party communication problem $F$,
$$
 \log \rank_{\eps'} (M_F) \leq \bfR^\eps(F) \leq \log^c  \rank_{\eps'} (M_F).
$$
\end{conjecture}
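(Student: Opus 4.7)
The Log Approximation Rank Conjecture is a major open problem in communication complexity, so rather than claiming a full proof, the plan is to describe a strategy by which the paper's tools could at least verify the conjecture for the restricted class of symmetric xor functions, and then identify what would be required to push the argument further. For $F(x,y)=f(x\oplus y)$ the matrix $M_F$ is diagonalized by the Fourier characters of $\mathbb{F}_2^n$, so $\rank(M_F)=\mon(f)$ and, up to polynomial relationships in $n$ and $1/\eps$, $\rank_{\eps'}(M_F)$ is tightly controlled by the approximate monomial complexity $\mon_{\eps''}(f)$ for an appropriate constant $\eps''$. Thus, for xor functions the conjecture reduces to bounding $\bfR^\eps(F)$ by a polynomial in $\log\mon_{\eps''}(f)$.

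In the symmetric case I would assemble three ingredients. First, an approximate-spectral-norm analog of Theorem~\ref{thm:l1-sym}, namely Conjecture~\ref{conj:approximateell1}, yields $\log\norm{\fc{f}}_{1,1/6}=\Theta^*(r(f)\log(n/r(f)))$. Second, the Bruck-Smolensky style inequality $\mon_\eps(f)\le\frac{2n}{\eps^2}\norm{\fc{g}}_1^2$, applied to a near-optimal spectral approximator $g$ of $f$, converts this into $\log\rank_{\eps'}(M_F)\le\log\mon_{1/3}(f)=\tilde O(r(f)\log(n/r(f)))$. Third, the Shi-Zhang protocol already invoked in the xor-function corollary delivers $\bfR^\eps(F)=O(\log^2\norm{\fc{F}}_1)=O(r(f)^2\log^2(n/r(f)))=O((\log\rank_{\eps'}(M_F))^2)$, which matches the conjectured polylogarithmic bound. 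The matching lower bound $\bfR^\eps(F)\ge\log\rank_{\eps'}(M_F)$ is standard via the discrepancy/approximate-rank method, so the symmetric xor case would follow conditionally on Conjecture~\ref{conj:approximateell1}.

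The main obstacle is escaping the symmetric xor setting. For general xor functions, Conjecture~\ref{conj:approximateell1} is neither known nor even formulated here, and the Shi-Zhang protocol uses symmetry in an essential way when sampling from the Fourier distribution. For arbitrary communication problems the situation is harder still: low approximate rank does not, by itself, expose any combinatorial structure a protocol designer can exploit. A reasonable intermediate target is to extend Conjecture~\ref{conj:approximateell1} to non-symmetric $f$, which would require a substitute for the noise-sensitivity lemma of \cite{OWZ11}, whose proof relies crucially on unimodality of symmetric-function level sets and on Claim~\ref{lem:unimodular}. Without such a substitute, the derivative-based $R(f)$ argument used in Lemmas~\ref{lem:lb-small-r} and~\ref{lem:lb-large-r} does not produce usable $\ell_2$ mass on middle levels, and the whole chain from $r(f)$ to $\log\mon_\eps$ to $\bfR^\eps$ breaks down. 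Any genuine attack on the full conjecture would therefore have to develop either a symmetrization reduction — showing every low-approximate-rank matrix is, in some averaged sense, reducible to a highly structured one — or an entirely new protocol-design technique driven directly by the approximate-rank decomposition $M_F=\sum_i\lambda_iu_iv_i^{\mathsf T}$, for instance via public-coin sampling combined with Chernoff concentration. I expect the symmetrization route to be the more promising but also the more technically demanding of the two.
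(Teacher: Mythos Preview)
The statement is presented in the paper as an open conjecture, not as a theorem with a proof, so there is no full proof to compare against. The paper does, however, sketch (in a parenthetical remark) how Conjecture~\ref{conj:approximateell1} together with Theorem~\ref{thm:l1-sym} would verify the conjecture for symmetric xor functions, and that sketch is the right thing to compare your proposal to. You correctly identify that only this restricted conditional claim is within reach, and your third paragraph on obstacles beyond the symmetric xor case is reasonable and in line with the paper's own discussion.

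Your conditional argument for symmetric xor functions, however, has a genuine directional gap. To conclude $\bfR^\eps(F)=O((\log\rank_{\eps'}(M_F))^2)$ from $\bfR^\eps(F)=O(r(f)^2\log^2(n/r(f)))$ you need a \emph{lower} bound $\log\rank_{\eps'}(M_F)=\Omega^*(r(f)\log(n/r(f)))$. Your step~2 goes the wrong way: the Bruck--Smolensky inequality and the easy direction $\rank_{\eps'}(M_F)\le \mon_{\eps''}(f)$ only give an \emph{upper} bound on $\log\rank_{\eps'}(M_F)$, which is useless for the desired implication. Your first-paragraph claim that $\rank_{\eps'}(M_F)$ is ``tightly controlled'' by $\mon_{\eps''}(f)$ in both directions is not justified; a low-rank approximator of $M_F$ need not arise from a low-monomial approximator of $f$, so the reverse inequality is not automatic.

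The paper's sketch avoids this pitfall by going through the approximate trace norm instead of $\mon_\eps$. For xor functions one has the exact identity $\norm{M_F}_{tr,\eps}=2^n\norm{\fc{f}}_{1,\eps}$, and for any matrix $\rank_\eps(M_F)^{1/2}\ge \norm{M_F}_{tr,\eps}/((1+\eps)2^n)$. Together these give $\log\rank_{\eps'}(M_F)\ge 2\log\norm{\fc{f}}_{1,\eps}-O(1)$, which combined with Conjecture~\ref{conj:approximateell1} yields the needed lower bound $\log\rank_{\eps'}(M_F)=\Omega^*(r(f)\log(n/r(f)))$. Plugging in the Shi--Zhang upper bound on $\bfR^\eps(F)$ then closes the loop. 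If you want to repair your route through $\mon_\eps$, you would need the Cauchy--Schwarz bound $\mon_\eps(f)\ge(\norm{\fc f}_{1,\eps}/(1+\eps))^2$ together with an inequality of the form $\rank_{\eps'}(M_F)\ge \mon_{\eps''}(f)^{\Omega(1)}$; the latter is essentially equivalent to the trace-norm argument the paper uses, so you end up in the same place.
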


The important paper of Razborov \cite{Raz03} established this conjecture for the functions $F(x,y) = f(x \wedge y)$ where $f$ is symmetric. In fact, Razborov showed that the quantum and classical randomized communication complexities of such functions are polynomially related. Later, Shi and Zhang \cite{SZ09a}, via a reduction to the case $f(x \wedge y)$, showed the quantum/classical equivalence for symmetric xor functions $F(x,y) = f(x \xor y)$. They show that the randomized and quantum bounded error communication complexities of $F$ are both $\Theta(r(f))$, up to polylog factors. However, their result does not verify the Log Approximation Rank Conjecture for symmetric xor functions.

Conjecture \ref{conj:approximateell1} along with Theorem~\ref{thm:l1-sym} would verify the Log Approximation Rank Conjecture for symmetric xor functions (This follows from the protocol of Shi and Zhang \cite[Proposition 3.4]{SZ09a} for symmetric xor functions, and the facts $\norm{M_F}_{tr,\eps} = 2^n\norm{\fc{f}}_{1,\eps}$ and $\rank_\eps(M_F)^{1/2} \geq \norm{M_F}_{tr,\eps}/(1+\eps)2^n$.). Furthermore, we would obtain a direct proof of the result of Shi and Zhang. This is very desirable since a major open problem is to understand the communication complexity of $f(x \xor y)$ for general $f$ (with no symmetry condition on $f$). There is a sentiment that this should be easier to tackle than $f(x \wedge y)$ as xor functions seem more amenable to Fourier analytic techniques. A direct proof of the result of Shi and Zhang gives more insight into the communication complexity of xor functions.

\subsubsection*{Agnostically Learning Symmetric Functions}

Let $\calC$ be a concept class and $g_i : \{-1,1\}^n \to \bbR$ be functions for $1 \leq i \leq s$ such that every $f: \{-1,1\}^n \to \{-1,1\}$ in $\calC$ satisfies
$
 \norm{f- \sum_{i=1}^s c_i g_i}_\infty \leq \eps,
$
for some reals $c_i$. The smallest $s$ for  which such $g_i$'s exist  corresponds to the $\eps$-approximate rank of $\calC$. If each $g_i(x)$ is computable in polynomial time, then $\calC$ can be agnostically learned under any distribution in time $\poly(n,s)$ and with accuracy $\eps$ \cite{KKMS08}.

Klivans and Sherstov \cite{KS10} proved strong lower bounds on the approximate rank of the concept class of disjunctions $\{\bigvee_{i \in S} x_i : S \subseteq [n]\}$ and majority functions $\{\f{maj}(\pm x_1, \pm x_2, \ldots, \pm x_n)\}$ thereby ruled out the possibility of applying the algorithm of \cite{KKMS08} to agnostically learning these concept classes.

Theorem~\ref{thm:l1-sym} together with Conjecture \ref{conj:approximateell1} provides additional negative results and gives strong lower bounds on the approximate rank of the concept class consisting of symmetric functions $f$ with large $r(f)$.

%
%

\bibliographystyle{alpha}
\bibliography{boolean-functions}

\appendix


\end{document}